\documentclass[aps,prl,reprint,longbibliography]{revtex4-1}
\usepackage{amsmath}
\usepackage{amssymb}
\usepackage{amsthm}
\usepackage{tikz}
\usetikzlibrary{decorations.markings,decorations.pathreplacing,calc}
\usetikzlibrary{arrows.meta}
\usepackage{algorithm}
\usepackage{algpseudocode}
\usepackage{thm-restate}
\usepackage{verbatim}
\usepackage[framemethod=tikz]{mdframed}
\usepackage{complexity}
\usepackage{enumitem}
\usepackage{subfig}
\usepackage[utf8]{inputenc}
\usepackage{soul}
\usepackage[colorlinks = true]{hyperref}

% Color definitions
\usepackage{xcolor}
\definecolor{darkred}  {rgb}{0.5,0,0}
\definecolor{darkblue} {rgb}{0,0,0.5}
\definecolor{darkgreen}{rgb}{0,0.5,0}

% Color links
\hypersetup{
  urlcolor   = blue,         % color of external links
  linkcolor  = darkblue,     % color of internal links
  citecolor  = darkgreen,    % color of links to bibliography
  filecolor  = darkred       % color of file links
}

\newtheorem{prop}{Proposition}

\newtheorem{claim}{Claim}
\newtheorem{lemma}{Lemma}

\newtheorem{fact}{Fact}

\newtheorem{theorem}{Theorem}
\newtheorem*{theorem*}{Theorem}

\newcommand{\be}{\begin{equation}}
\newcommand{\ee}{\end{equation}}
\newcommand{\ba}{\begin{array}}
\newcommand{\ea}{\end{array}}
\newcommand{\bea}{\begin{eqnarray}}
\newcommand{\eea}{\end{eqnarray}}

\def\a{\alpha}

\def\l{\lambda}
\def\m{\mu}

\def\r{\rho}

\newcommand{\bbR}{\mathbb{R}}

\newcommand{\nn}{\nonumber}

\def\id{{\operatorname{id}}}
\newcommand{\br}[1]{\left(#1\right)}

\DeclareMathOperator{\supp}{supp}

\DeclareMathOperator{\Var}{Var}
\newcommand{\tr}{{\rm tr}}
\newcommand{\Tr}{{\rm tr}}
\newcommand{\bra}[1]{\langle #1|}
\newcommand{\ket}[1]{|#1\rangle}

\newcommand{\ketbra}[2]{|#1\rangle\langle #2|}
\def\nn{\nonumber}

\def\norm#1{ {|\hspace{-.022in}|#1|\hspace{-.022in}|} }

\def\iden{I}

\def\id{I}

\def\Pr{\mathrm{Pr}}
\newcommand{\ot}{\otimes}

\newcommand {\Br} [1] {\ensuremath{ \left[ #1 \right] }}

\newcommand{\expec}{\mathbb{E}}

\newcommand{\enq}{\end{equation}}
\newcommand{\beqst}{\begin{equation*}}
\newcommand{\enqst}{\end{equation*}}
\newcommand{\beqar}{\begin{eqnarray}}
\newcommand{\enqar}{\end{eqnarray}}
\newcommand{\beqarst}{\begin{eqnarray*}}
\newcommand{\enqarst}{\end{eqnarray*}}
\def\beq{\begin{equation}}
\def\eeq{\end{equation}}

\newcommand{\thmref}[1]{Theorem~\ref{thm:#1}}

\setlength{\parskip}{1mm}

%%%%%%%%%%%%%%%%%%%%%%%%%%%%%%%%%%%%%%%%%%%%%%%%%%%%%%%%%%%%%%%%%%%%%%%%%%%%%%%%%%%%
\begin{document}

\title{Improved approximation algorithms for bounded-degree local Hamiltonians}

\author{Anurag Anshu$^{1}$}
\author{David Gosset$^{2}$}
\author{Karen J. Morenz Korol$^{3}$}
\author{Mehdi Soleimanifar$^{4}$}
\affiliation{$^1$ Department of EECS $\&$ Challenge Institute for Quantum Computation, University of California, Berkeley, USA and Simons Institute for the Theory of Computing, Berkeley, California, USA.}
\affiliation{$^2$ Department of Combinatorics and Optimization and Institute for Quantum Computing, University of Waterloo, Canada}
\affiliation{$^3$ Department of Chemistry, University of Toronto, Canada}
\affiliation{$^4$ Center for Theoretical Physics, Massachusetts Institute of Technology, USA}

\begin{abstract}
We consider the task of approximating the ground state energy of two-local quantum Hamiltonians on bounded-degree graphs. Most existing algorithms optimize the energy over the set of product states. Here we describe a family of shallow quantum circuits that can be used to improve the approximation ratio achieved by a given product state. The algorithm takes as input an $n$-qubit product state $|v\rangle$ with mean energy $e_0=\langle v|H|v\rangle$ and variance $\mathrm{Var}=\langle v|(H-e_0)^2|v\rangle$, and outputs a state with an energy that is lower than $e_0$ by an amount proportional to $\mathrm{Var}^2/n$. In a typical case, we have $\mathrm{Var}=\Omega(n)$ and the energy improvement is proportional to the number of edges in the graph.  When applied to an initial random product state, we recover and generalize the performance guarantees of known algorithms for bounded-occurrence classical constraint satisfaction problems. We extend our results to $k$-local Hamiltonians and entangled initial states.
\end{abstract}

\maketitle

Quantum computers are capable of efficiently computing the dynamics of quantum many-body systems \cite{lloyd1996universal}, and it is anticipated that they can be useful for scientific applications in physics, materials science and quantum chemistry. The extent of the quantum advantage for other important simulation tasks, such as computing low temperature properties of quantum systems, is still unknown. In this paper we consider the task of approximating the ground state energy of local Hamiltonians. Here it is natural to expect some improvement over classical machines which cannot even store the state of such systems efficiently. Indeed, classical methods such as the mean-field or Hartree-Fock approximations do not capture the entanglement structure present in the true ground state.  

Motivated by small quantum computers that may be available in the near future, there has been increased interest in devising algorithms that consume few quantum resources and can be implemented across a wide range of hardware platforms. In this vein, heuristic algorithms for ground state preparation have been proposed based on variationally minimizing the energy over the output states of shallow (low-depth) quantum circuits \cite{peruzzo2014VQE, farhi2014quantum,kandala2017hardware}. Although variational algorithms have been rigorously analyzed for specific problems and some limitations are known \cite{mcclean2018barren,farhi2020quantum,BravyiObstaclesQAOA,bravyi2021classical}, no general treatment of their efficacy exists. Characterizing the advantage offered by shallow quantum circuits and variational quantum algorithms stands as a pressing challenge.

In this paper, we derive rigorous bounds on the performance of shallow quantum circuits in estimating the ground state energy of local Hamiltonians. For simplicity, we state our results for a system of qubits with two-local interactions. In the Supplemental Material, we discuss extensions of our results to $k$-local Hamiltonians. 

To begin, let $G=(V,E)$ be a graph, and consider a Hamiltonian 
\begin{equation}
H=\sum_{\{i,j\}\in E} h_{ij}
\label{eq:ham}
\end{equation}
with $n=|V|$ qubits and nearest-neighbor interactions $h_{ij}$ that act nontrivially only on qubits $\{i,j\}$ at vertices connected by an edge. We assume without loss of generality that $\|h_{ij}\|\leq 1$. We are interested in the problem of approximating the ground energy or smallest eigenvalue $\l_{\min}(H)$ of the Hamiltonian. It will be convenient to instead approximate the largest eigenvalue $\lambda_{\max}(H)$; this convention matches the one used in classical optimization and is without loss of generality, since $\lambda_{\min}(H)=-\lambda_{\max}(-H)$. In the worst case, the problem of estimating the largest eigenvalue $\lambda_{\mathrm{max}}(H)$ of Eq.~\eqref{eq:ham} to within an additive error scaling inverse polynomially with $n$ is believed to be intractable for quantum or classical computers \footnote{In particular, a decision version of this problem is complete for the complexity class QMA which is a quantum generalization of NP}.  Here we consider the approximation task where the goal is to compute an estimate $e\leq \lambda_{\mathrm{max}}(H)$ such that the approximation ratio $r\equiv e/\lambda_{max}(H)$ is as large as possible. We will also be interested in efficient quantum algorithms that prepare states $|\psi\rangle$ with good approximation ratios.

Besides describing local interactions encountered in physics, Hamiltonians of the form  Eq.~\eqref{eq:ham} can encode notable cost-functions considered in computer science and thus provide a physically motivated extension of the classical \emph{approximation algorithm} setting \cite{vazirani2013approximation}. For example, one may consider an Ising Hamiltonian for which $h_{ij}=(I+Z_i Z_j)/2$, where $Z$ is the Pauli operator. This Hamiltonian is classical---that is, diagonal in the computational basis---and computing its maximum eigenvalue is equivalent to finding the Max-Cut of the graph $G$, a well-studied classical optimization problem. More generally, two-local quantum Hamiltonians may involve noncommuting terms such as Heisenberg interactions $h_{ij}=1/4(I-X_iX_j-Y_iY_j-Z_iZ_j)$ (with Pauli $X,Y$ and $Z$ operators); the resulting optimization problem can be viewed as a quantum analogue of Max-Cut \cite{GharibianParekhMaxCut}. Quantum approximation algorithms aim to estimate the largest eigenvalue of such Hamiltonians and have been studied in several previous works. This includes the Heisenberg interactions mentioned above \cite{GharibianParekhMaxCut, anshu2020beyond} and more general settings in which the interaction terms $h_{ij}$ are restricted to be positive semidefinite \cite{gharibian2012approximation, hallgren2020approximation, parekh2020beating}, or traceless \cite{harrow2017extremal, bravyi2019approximation}. 

Despite considerable interest, the ultimate limits of efficient algorithms for quantum approximation algorithms are poorly understood.  Approximation ratios approaching $1$ are only known to be achievable for certain special families of graphs, including lattices or bounded-degree planar graphs using tensor product of $O(1)$-qubit states \cite{bansal2007classical} or high degree graphs using tensor products of single-qubit states \cite{bansal2007classical,gharibian2012approximation,brandao2016product}. In certain cases, one may ascertain limitations on efficient achievable approximation ratios from the classical Probabilistically Checkable Proof (PCP) theorem \cite{AroraPCP,AroraPCP2,DinurPCP}, though stronger and more general limitations may follow from the quantum PCP conjecture if some version of it can be proven \cite{aharonov2013guest}. 

A quantum approximation algorithm typically outputs an estimate of the form $\langle v|H|v\rangle$ where  $|v\rangle$ is a quantum state computed by the algorithm. A central challenge is to understand the structure of quantum states $|v\rangle$ that achieve high approximation ratios in the general case.  Most existing algorithms are based on tensor products of one- or few-qubit states, while Ref.~\cite{anshu2020beyond} also considers states prepared by shallow quantum circuits. In this work we describe conditions under which the performance of such algorithms can be \textit{improved}. We restrict our attention to local Hamiltonians on bounded-degree graphs and consider an improvement strategy based on shallow quantum circuits. 

\paragraph{Improvement of product states} To this end, suppose we are given an $n$-qubit state $|v\rangle$ and a Hamiltonian Eq.~\eqref{eq:ham} defined on a graph $G=(V,E)$ with maximum degree $d\geq 2$. It will be convenient to assume (without loss of generality) that $G$ is $d$-regular---we can ensure this by possibly adding some local terms $h_{ij}$ which are equal to zero. We imagine that $|v\rangle$ may be the output of some approximation algorithm such as the ones described above. Our aim is to efficiently compute a state with energy larger than $\langle v|H|v\rangle$.  Moreover, we would like to increase this energy by an amount proportional to $|E|$ in order to guarantee that the \textit{approximation ratio} is larger by some additive constant. We show that this is possible if the following two conditions hold:

\begin{enumerate}[label=(\roman*)]

\item The variance of the energy, defined by $$\Var_v(H)=\langle v|H^2|v\rangle-\langle v|H|v\rangle^2,$$ satisfies $\Var_v(H)=\Omega(|E|)$ \footnote{i.e. there is a universal constant $c$, such that asymptotically $\Var_v(H)\geq c\cdot|E|$}.

\item The state $|v\rangle$ is a product state. That is, $\ket{v}=|v_1\rangle\otimes |v_2\rangle\otimes \ldots \otimes |v_n\rangle$ where each $|v_i\rangle$ is a single-qubit state.
\end{enumerate}
\begin{theorem}
Given a product state $|v\rangle$, we can efficiently compute a depth-$(d+1)$ quantum circuit $U$ such that the state $|\psi\rangle=U|v\rangle$ satisfies
\begin{equation}
\langle \psi|H|\psi\rangle\geq \langle v|H|v\rangle+\Omega\left(\frac{\mathrm{Var}_v(H)^2}{d^2|E|}\right).
\label{eq:varbound}
\end{equation}
\label{thm:prodvar}
\end{theorem}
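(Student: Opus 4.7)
The plan is to construct $U$ as a shallow circuit of two-qubit gates, each chosen so that at leading order in a small parameter $\epsilon$, the state $U|v\rangle$ is deformed in the direction of $(H-e_0)|v\rangle$ where $e_0=\langle v|H|v\rangle$. For each edge $e=\{i,j\}$ let $\tilde h_e = h_e - \langle v|h_e|v\rangle I$ and define on the two qubits of $e$ the Hermitian operator
\[
A_e \;=\; i\bigl(\tilde h_e\,|v_i v_j\rangle\langle v_i v_j| \;-\; |v_i v_j\rangle\langle v_i v_j|\,\tilde h_e\bigr).
\]
Since $\langle v_i v_j|\tilde h_e|v_i v_j\rangle = 0$, a short calculation gives $\|A_e\| = O(1)$ together with the identity $A_e|v\rangle = i\tilde h_e|v\rangle$. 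Using Vizing's theorem, partition the edges of the $d$-regular graph into $d+1$ matchings $E = E_1 \sqcup \cdots \sqcup E_{d+1}$; within each matching the $A_e$ act on disjoint qubit pairs and commute. Define
\[
U \;=\; \prod_{c=1}^{d+1}\,\prod_{e \in E_c} e^{-i\epsilon A_e},
\]
which has depth exactly $d+1$, with $\epsilon>0$ to be chosen.

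Summing the above identity over all edges yields $\sum_e A_e|v\rangle = i(H-e_0)|v\rangle$; this step uses the product structure of $|v\rangle$ in an essential way. A first-order Taylor expansion then gives
\[
U|v\rangle \;=\; |v\rangle + \epsilon\,(H-e_0)|v\rangle + O(\epsilon^2),
\]
so that $\langle v|U^\dagger H U|v\rangle - e_0 = 2\epsilon\,\Var_v(H) + R(\epsilon)$, where $R(\epsilon) = O(\epsilon^2)$ collects the higher-order contributions. The leading term $2\epsilon\,\Var_v(H)$ is precisely the signal we want.

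The main technical step is to bound $|R(\epsilon)|$ by a constant times $d^2|E|\,\epsilon^2$. I will do this by expanding each summand $\langle v|U^\dagger h_{e'}U|v\rangle$ individually. At order $\epsilon^2$, the contribution comes from double commutators of the form $[A_e,[A_{e''},h_{e'}]]$, which are nonzero only when the edges $e$ and $e''$ both interact with $h_{e'}$ through the graph; a direct counting shows there are at most $O(d^2)$ such pairs for each $e'$, giving $|R(\epsilon)| = O(\epsilon^2 d^2|E|)$ after summing over $e'$. The main obstacle is to prevent the higher-order corrections ($\epsilon^k$, $k\geq 3$) from spoiling this bound; a similar counting gives a $k$-th order contribution of size $O(\epsilon^k|E|d^k)$, so choosing $\epsilon = c\,\Var_v(H)/(d^2|E|)$ for a small enough constant $c$ — which forces $\epsilon = O(1/d)$ since $\Var_v(H) = O(d|E|)$ — makes these terms form a geometrically convergent series dominated by the $k=2$ contribution. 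Optimizing this choice of $\epsilon$ then yields the claimed improvement $\Omega(\Var_v(H)^2/(d^2|E|))$.
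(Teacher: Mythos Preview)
Your argument is correct and reaches the stated bound, but by a genuinely different route from the paper's. The paper does not build edge-tailored generators $A_e$; instead it uses a circuit of \emph{commuting} gates $e^{i\theta_{ij}P_iP_j}$ built from single-qubit operators with $\langle v_i|P_i|v_i\rangle=0$ (Theorem~\ref{thm:performance}), and then links the first-order gain $\alpha$ to the variance by decomposing $\Var_v(H)=\langle 0^n|HQ_1H|0^n\rangle+\langle 0^n|HQ_2H|0^n\rangle$ according to Hamming weight. When the weight-$2$ part dominates, the vanishing condition $\langle v_i|P_i|v_i\rangle=0$ kills most higher-order tuples (their Claim~1), so the error per edge is only $O(d\theta^2)$ and one obtains the sharper $\Omega(\Var_v(H)^2/(d|E|))$; the $1/d^2$ in the final statement comes entirely from the weight-$1$ case, handled by a single-qubit (product-state) rotation. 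Your construction is instead closer in spirit to the paper's Theorem~\ref{thm:lowdepthimp}: the identity $A_e|v\rangle=i\tilde h_e|v\rangle$ makes the first-order gain equal to $2\epsilon\Var_v(H)$ directly, with no case split, at the cost of non-commuting gates and a cruder lightcone count that only yields $O(d^2\epsilon^2|E|)$ error. So you recover the theorem uniformly but do not see the improved $1/d$ scaling for locally optimal states that the paper's ansatz exposes.

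One small point to tighten: the $k$-th order contribution is not literally $O((d\epsilon)^k)$ with a universal constant. Because the support of the nested commutator grows by one vertex per step, the tuple count picks up a factor $\prod_{j=1}^k(j+1)=(k+1)!$, which is only partially offset by the $1/\prod_c k_c!$ coming from the layered exponentials (summing those factors over compositions gives $(d+1)^k/k!$). The net effect is a bound of the form $(k+1)(C d\epsilon)^k$ per edge rather than $(Cd\epsilon)^k$; this still sums geometrically for $\epsilon=O(1/d)$ and leaves your conclusion intact.
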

\vspace{-2mm}
This result applies broadly to quantum optimization problems, but does not provide any improvement when specialized to the classical setting. To see this, note that condition (i) is not satisfied in the purely classical case where $|v\rangle$ is a computational basis state and $H$ is diagonal in the computational basis. Indeed, we have $\mathrm{Var}_v(H)=0$ whenever $|v\rangle$ is an eigenstate of $H$. On the other hand,  condition (i) is fairly mild in the quantum setting, which can be seen from the following expression for the variance:
\[
\mathrm{Var}_v(H)=\sum_{\{i,j\}\cap \{k,l\}\neq \emptyset} \left(\langle v|h_{ij}h_{kl}|v\rangle-\langle v|h_{ij}|v\rangle\cdot \langle v|h_{kl}|v\rangle\right).
\]
Since $G$ is $d$-regular, the number of terms in the sum is $O(d|E|)$. So condition (i) is satisfied if the sum is proportional to the number of terms appearing in it.

Simple examples demonstrate that neither of the two conditions alone is enough to even guarantee the existence of a state with approximation ratio better than $|v\rangle$ for large regular graphs. Condition (ii) alone is not sufficient because it is possible for a product state to have maximal energy $\lambda_{\mathrm{max}}(H)$ (i.e., this occurs for all classical Hamiltonians). To see that condition (i) is not sufficient, one can consider the Max-Cut Hamiltonian on (say) an even cycle graph, and let $|v\rangle$ be an equal superposition of two eigenstates of $H$, one with maximal energy $|E|$ and one with energy $|E|-\Theta(\sqrt{|E|})$. The resulting state has approximation ratio $1-O(|E|^{-1/2})$ and variance $\mathrm{Var}_v(H)=\Omega(|E|)$. Thus condition (i) is satisfied, but the approximation ratio cannot be improved by an additive constant.

In the special case where $|v\rangle$ achieves the largest energy of any product state, we are able to strengthen the bound Eq.~\eqref{eq:varbound}. We say that the product state $|v\rangle$ is \emph{locally optimal} for $H$ if for any single-qubit Pauli $Q$, we have
\[
\frac{d}{d\phi} \langle v|e^{-i\phi Q} H e^{i\phi Q}|v\rangle \big|_{\phi=0} =0,
\]
or equivalently $\langle v|[Q,H]|v\rangle=0$. As we show in the Supplemental Material, the bound in Eq.~\eqref{eq:varbound} can be improved to $\langle v | H | v \rangle +\Omega(\frac{\mathrm{Var}_v(H)^2}{d |E| })$ for locally optimal states. 

Generally, however, the improvement stated in Eq.~\eqref{eq:varbound} is optimal in the sense that there exists a Hamiltonian $H$ and a product state $|v\rangle$ with $\mathrm{Var}_v(H)=\Theta(|E|)$ for which 
\begin{equation}
\lambda_{\mathrm{max}}(H)-\langle v|H|v\rangle\leq O\left(\frac{\mathrm{Var}_v(H)^2}{d^2 |E|}\right).
\label{eq:optimality}
\end{equation}
For example, Eq.~\eqref{eq:optimality} is satisfied by the Hamiltonian with $h_{ij}=Z_i+Z_j$ on any $d$-regular graph and the product state $|v\rangle=\left(\cos(\theta)|0\rangle+\sin(\theta)|1\rangle\right)^{\otimes n}$, for any $\theta\in (0,\pi/2)$. In this simple case, the right-hand side can be computed exactly and is equal to $\frac{\mathrm{Var}(H)^2}{d^2|E|} \cdot \frac{4\sin^2(\theta)}{\sin^2(2\theta)}$.

To establish Theorem~\ref{thm:prodvar}, we consider a variational family of states obtained from $|v\rangle=\otimes_{i \in V} \ket{v_i}$ by applying a quantum circuit composed of nearest neighbor commuting gates on the interaction graph $G$. In particular, let $P_1,P_2,\ldots, P_n$ be any collection of single-qubit operators such that $\|P_i\|\leq 1$ and 
\[
\bra{v_i}P_i\ket{v_i}=0 \quad \text{for all} \quad i\in V.
\]
Following \cite{anshu2020beyond}, we define the circuit
\begin{align}
V(\vec{\theta}) = \bigotimes_{\{i,j\}\in E} e^{i\theta_{ij}P_iP_j}= e^{i\sum_{\{i,j\}\in E}\theta_{ij}P_iP_j}.\label{eq:rr2}
\end{align}
Here, $\vec{\theta}$ is an array of real parameters $\{\theta_{ij}\}_{\{i,j\}\in E}$. Since by assumption, the interaction graph $G$ is $d$-regular, the quantum circuit $V(\vec{\theta})$ can be implemented with circuit depth $d+1$. It is not hard to see that this variational family includes as a special case the level-$1$ Quantum Approximate Optimization Algorithm (QAOA)  for $2$-local classical Hamiltonians \cite{farhi2014quantum}. 
For a given choice of operators $\{P_i\}_{i\in V}$, the following theorem lower bounds the improvement in the energy after applying the the quantum circuit $V(\vec{\theta})$ to $\ket{v}$.

\begin{theorem}\label{thm:general bound}
Let $\ket{v}$ be a product state and $\ket{\psi}=V(\vec{\theta}) \ket{v}$ be the state prepared by the quantum circuit Eq.~\eqref{eq:rr2}. Define the positive real parameter $\a$ by
\begin{align}
    \alpha=\expec_{\{i,j\}\in E}|\bra{v_i,v_j}[P_iP_j, h_{ij}]\ket{v_i,v_j}|,\label{eq:alpha}
\end{align}
where the expectation is with respect to the uniform distribution over the edges. There is an efficient classical algorithm to select parameters $\vec{\theta}$ satisfying
\begin{align}
\bra{\psi}H\ket{\psi}\geq \bra{v}H\ket{v} +\Omega\big(|E|\alpha^2/d\big).\label{eq:r10}
\end{align}
\label{thm:performance}
\end{theorem}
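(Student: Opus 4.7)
The plan is to Taylor-expand $f(\vec\theta)\equiv\langle v|V(\vec\theta)^\dagger H V(\vec\theta)|v\rangle$ around $\vec\theta=0$, choose $\theta_{ij}=\theta\cdot\mathrm{sign}(c_{ij})$ where $c_{ij}\equiv i\langle v|[h_{ij},P_iP_j]|v\rangle$ is a real number, and optimize the single scalar $\theta>0$. A convenient starting point is that the generators $\{P_iP_j\}_{\{i,j\}\in E}$ mutually commute: two distinct edges share at most one vertex $m$, and the resulting product only involves $P_m^2$ (which commutes with everything on qubit $m$). Hence $V(\vec\theta)=\exp(iA)$ with $A=\sum_{\{i,j\}\in E}\theta_{ij}P_iP_j$, and one may expand $V^\dagger h_{ij} V = h_{ij}+i[h_{ij},A]-\tfrac12[[h_{ij},A],A]+\dots$ term by term.

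For the linear part I would use the hypothesis $\langle v_i|P_i|v_i\rangle=0$ together with the product structure of $|v\rangle$ to kill all off-diagonal contributions $\langle v|[h_{ij},P_kP_l]|v\rangle$ with $\{k,l\}\ne\{i,j\}$. If $\{k,l\}\cap\{i,j\}=\emptyset$ the commutator is zero. If they share exactly one vertex, say $l\notin\{i,j\}$, then $P_l$ commutes with $h_{ij}$ and with $P_k$, so $[h_{ij},P_kP_l]=[h_{ij},P_k]P_l$; the product-state expectation factors through $\langle v_l|P_l|v_l\rangle=0$. Only $\{k,l\}=\{i,j\}$ survives, contributing $\theta_{ij}c_{ij}$, and summing over edges gives a linear contribution of $\theta\sum_{\{i,j\}\in E}|c_{ij}|=\theta|E|\alpha$.

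The main work is to bound the second-order correction tightly. The naive operator-norm bound on $-\tfrac12\sum_{\{i,j\}\in E}\sum_{\{k,l\},\{k',l'\}}\theta_{kl}\theta_{k'l'}\langle v|[[h_{ij},P_kP_l],P_{k'}P_{l'}]|v\rangle$ yields only $O(d^2|E|\theta^2)$, which would degrade the final improvement to $\Omega(|E|\alpha^2/d^2)$. The sharp bound again exploits $\langle v_i|P_i|v_i\rangle=0$: in product-state expectation, any qubit $m\notin\{i,j\}$ appearing in the double commutator with an unpaired $P_m$ annihilates the term. A case analysis on the overlap pattern of $\{i,j\},\{k,l\},\{k',l'\}$ shows that the only surviving configurations are (a) $\{k,l\}=\{k',l'\}$ incident to $\{i,j\}$ (the extraneous qubit appears as $P_m^2$), of which there are $O(d|E|)$, and (b) ``triangle'' configurations where $\{k,l\},\{k',l'\}$ share a common vertex $a\notin\{i,j\}$ with $\{i,a\},\{j,a\}\in E$, of which there are at most $O(d|E|)$ as well (one triangle per common neighbor). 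Each surviving triple is bounded in magnitude by $O(\theta^2)$, so the total second-order correction is $O(d|E|\theta^2)$. Higher-order terms are controlled by a cruder bound using $\|A_{ij}\|=O(d\theta)$ on the $2d$-local part of $A$ around each edge, contributing at most $O(d^3|E|\theta^3)$.

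Combining, $f(\vec\theta)-f(0)\ge \theta|E|\alpha - C_1 d|E|\theta^2 - C_2 d^3|E|\theta^3$. Choosing $\theta=\Theta(\alpha/d)$ balances the first two terms and (since $\alpha=O(1)$) keeps us in the regime where the cubic is dominated by the quadratic, yielding the claimed energy improvement $\Omega(|E|\alpha^2/d)$. The parameters $\{\theta_{ij}\}$ are computed in $O(|E|)$ time since each $c_{ij}$ depends only on $|v_i\rangle,|v_j\rangle,h_{ij},P_i,P_j$. The principal obstacle throughout is the locality-plus-product-state case analysis for the second-order bound; without the ``paired-qubits'' reduction, a factor of $d$ is lost in the final improvement.
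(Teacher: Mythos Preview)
Your approach matches the paper's (commutator expansion, sign choice $\theta_{ij}=\theta\cdot\mathrm{sign}(c_{ij})$, and the ``paired--vertex'' cancellation coming from $\langle v_m|P_m|v_m\rangle=0$), and your linear and $m=2$ analyses are correct. The gap is in the $m\ge 3$ tail. Your crude bound uses only $\|A_{ij}\|=O(d\theta)$ and yields a tail of order $d^3|E|\theta^3$. With $\theta=\Theta(\alpha/d)$ this is $\Theta(|E|\alpha^3)$, which is \emph{not} dominated by the quadratic $\Theta(|E|\alpha^2/d)$ unless $\alpha d=O(1)$; your parenthetical ``since $\alpha=O(1)$'' is the wrong direction---you would need $\alpha=O(1/d)$. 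Optimizing $\theta$ against the cubic instead of the quadratic gives only $\Omega(|E|\alpha^{3/2}/d^{3/2})$, strictly weaker than the claim whenever $\alpha\gg 1/d$.

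The fix, and the paper's key extra ingredient, is to apply the paired--vertex cancellation at \emph{every} order $m\ge 2$, not just $m=2$. The paper proves a combinatorial claim: the number of ordered tuples $(e_1,\dots,e_m)\in N_{ij}^{\times m}$ in which no vertex of $V\setminus\{i,j\}$ appears exactly once is at most $(2m\sqrt{d})^m$. The point is that each ``external'' vertex must appear at least twice, so there are at most $m/2$ distinct external vertices, each chosen from $O(d)$ neighbors; this replaces the naive $d^m$ by $d^{m/2}$ up to an $m^m$ factor absorbed by the $1/m!$. With this count the entire tail $m\ge 2$ is bounded by $\sum_{m\ge 2}(4e\sqrt{d}\,\theta)^m=O(d\theta^2)$ for $\theta=O(1/\sqrt{d})$, which at $\theta=\Theta(\alpha/d)$ gives $O(|E|\alpha^2/d)$ and matches the linear gain. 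Your $m=2$ case analysis is the $m=2$ instance of this claim; you need the general $m$ version (or an equivalent argument) to get the stated $d$--dependence.
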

\vspace{-3mm}

\begin{proof}
 
Write $N_{ij}$ for the set of edges $\{k,\ell\}\in E$ incident to a given edge $\{i,j\}\in E$. The latter edge is included as well, i.e.,  $\{i,j\}\in N_{ij}$. Consider the energy of a term
$$\bra{\psi}h_{ij}\ket{\psi}=\bra{v}V(\vec{\theta})^{\dagger}h_{ij}V(\vec{\theta})\ket{v}.$$ The gates in $V(\vec{\theta})$ which are associated with edges that are not incident with $\{i,j\}$ can be cancelled, leaving $\langle v|V_{ij}^{\dagger} h_{ij} V_{ij}|v\rangle$ where $V_{ij}=\prod_{\{k,\ell\}\in N_{ij}} e^{i\theta_{k\ell}P_kP_\ell}$. Thus
\begin{align}
\label{eq:basisexpand}
\bra{\psi}h_{ij}\ket{\psi}&= \bra{v}h_{ij}\ket{v} +\nonumber\\
&\sum_{m=2}^{\infty} \frac{i^m}{m!}\bra{v}\bigg[\sum_{\{k,\ell\}\in N_{ij}} -\theta_{k\ell}P_kP_{\ell}, h_{ij}\bigg]_m\ket{v}.
\end{align}
Here, $\Br{A,B}_m$ is the $m$-nested commutator $\Br{A,\Br{A,\ldots \Br{A,B}}}$. Using the fact that $\bra{v_k}P_k\ket{v_k}=0$ for all $k$, the $m=1$ term simplifies to
\begin{equation}
\label{eq:1storder}
\sum_{\{k,\ell\}\in N_{ij}} -i\theta_{k\ell}\bra{v}\Br{P_kP_{\ell}, h_{ij}}\ket{v}=-i\theta_{ij}\bra{v}[P_iP_j, h_{ij}]\ket{v}.
\end{equation}
 At this stage, we make the choice
\beq
\label{eq:thetaijchoice}
\theta_{ij}=\theta\cdot \text{sign}\br{-i\bra{v}[P_iP_j, h_{ij}]\ket{v}},
\enq
where the parameter $\theta$ will be determined later. Substituting in  Eq. \eqref{eq:1storder} gives
\beq
\label{eq:1storder1}
\sum_{\{k,\ell\}\in N_{ij}} -i\theta_{k\ell}\bra{v}\Br{P_kP_{\ell}, h_{ij}}\ket{v}=\theta|\bra{v_i,v_j}[P_iP_j, h_{ij}]\ket{v_i,v_j}|.
\enq
For $m>1$, we have
\begin{align}
&\bigg|\bra{v}\bigg[\sum_{\{k,\ell\}\in N_{ij}} -\theta_{k\ell}P_kP_{\ell}, h_{ij}\bigg]_m\ket{v}\bigg|\nonumber\\
&\leq\sum_{\substack{\{k_1,\ell_1\}, \{k_2, \ell_2\},\ldots\\ \{k_m,\ell_m\}\in N_{ij}}}\theta^m\left|\bra{v}\Br{P_{k_1}P_{\ell_1}, \Br{\ldots,\Br{P_{k_m}P_{\ell_m},h_{ij}}}}\ket{v}\right|.\nonumber
\end{align}
The only nonzero terms are those in which the expression $\bra{v_s}P_s\ket{v_s}$ does not appear. To upper bound the number of nonzero terms, we count the number of tuples $(\{k_1,\ell_1\}, \{k_2, \ell_2\},\ldots \{k_m,\ell_m\})$ such that no vertex in $V\setminus \{i,j\}$ appears exactly once. An upper bound is provided in the following.
\begin{claim}
Let $m\geq 2$. The number of ordered tuples of edges $(\{k_1,\ell_1\}, \{k_2, \ell_2),\ldots ,\{k_m,\ell_m\})\in N_{ij}^{\times m}$ in which no vertex in $V\setminus\{i,j\}$ appears exactly once is at most $(2m \sqrt{d})^m$. 
\end{claim}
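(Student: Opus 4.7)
The plan is to upper bound the number of valid tuples, call it $N$, by parameterizing each tuple by the set $S$ of distinct outside vertices that actually appear.

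First, I would describe the structure of $N_{ij}$: it consists of the edge $\{i,j\}$ itself together with edges of the form $\{i,w\}$ or $\{j,w\}$ where $w$ ranges over the set $W$ of vertices in $V\setminus\{i,j\}$ adjacent to $i$ or to $j$. Since $G$ is $d$-regular we have $|W|\leq 2(d-1)$, and for every $w\in W$ there are at most two edges in $N_{ij}$ containing $w$ (namely $\{i,w\}$ and $\{j,w\}$, when both exist).

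Next, for any valid tuple let $S\subseteq W$ be the set of outside vertices that actually appear. By hypothesis every $w\in S$ appears at least twice, which immediately forces $|S|\leq m/2$. Grouping tuples by their associated $S$, and noting that once $S$ is fixed each of the $m$ positions has at most $1+2|S|$ edge choices (the edge $\{i,j\}$, or one of at most two edges per outside vertex in $S$), gives
\begin{equation*}
N \;\leq\; \sum_{q=0}^{\lfloor m/2\rfloor}\binom{|W|}{q}(1+2q)^m \;\leq\; \sum_{q=0}^{\lfloor m/2\rfloor}\binom{2d}{q}(1+2q)^m.
\end{equation*}
I would then estimate this sum: the dominant contribution comes from $q$ close to $m/2$, and applying the Stirling-type estimate $\binom{2d}{q}\leq (2ed/q)^{q}$ together with $(1+2q)^m\leq (m+1)^m$ for $q\leq m/2$ reduces the bound to the form $(C m\sqrt{d})^m$ for an absolute constant $C$, whence $(2m\sqrt{d})^m$ follows after absorbing sub-leading factors (or, if needed, by handling small values of $m$ separately via direct enumeration, in which case the count is trivially $O(d^{\lfloor m/2\rfloor})$).

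The main obstacle is pinning down the constant prefactor $2$ cleanly: the sum above is loose, since a chosen vertex in $S$ is permitted not to appear at all, so both the size of $S$ and its usage are being over-counted. Sharper constants can be obtained either by replacing $(1+2q)^m$ with the number of tuples in which \emph{every} $w\in S$ is used (via inclusion--exclusion on ``$w$ is missing''), or by an encoding argument in which each non-$\{i,j\}$ position records a ``partner'' pointer to another position sharing its outside vertex (which exists since each appearing vertex does so at least twice) together with a per-equivalence-class choice of outside vertex; the latter route yields a bound of the same order with a transparent constant arising from the product $(2m)^m\cdot|W|^{m/2}$.
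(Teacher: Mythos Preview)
Your approach is correct and yields the right order of growth $(Cm\sqrt{d})^m$, but it is genuinely different from the paper's, and your own caveat about the constant is well placed: the sum $\sum_{q\le m/2}\binom{2d}{q}(1+2q)^m$ does not drop below $(2m\sqrt d)^m$ uniformly without some case analysis for small $m$, so the ``absorbing sub-leading factors'' step is where your primary argument loses sharpness.

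The paper avoids this by taking exactly the encoding route you sketch in your last paragraph, but with a tighter parameterization. First it treats the case where the edge $\{i,j\}$ does not occur. Each such position then carries an outside vertex $v_p$; instead of recording a partner pointer (which would cost $m^m$), the paper assigns to every position a \emph{color} in $\{1,\dots,\lfloor m/2\rfloor\}$ and identifies positions of the same color, costing only $(m/2)^m$. Each color then chooses a neighbor of $i$ or $j$ (at most $2d$ options), and finally each position independently picks which of $i,j$ completes its edge ($2^m$). The product $2^m\cdot(m/2)^m\cdot(2d)^{m/2}=(m\sqrt{2d})^m$ already has the right constant. Positions equal to $\{i,j\}$ are then handled by summing over the number $u$ of such positions, $\sum_u\binom{m}{u}((m-u)\sqrt{2d})^{m-u}\le(2m\sqrt d)^m$.

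What each buys: your set-$S$ decomposition is conceptually cleaner and immediately explains the $d^{m/2}$ scaling (at most $m/2$ outside vertices, each costing a factor $O(d)$), but the per-position factor $1+2q$ is wasteful and the resulting sum needs work. The paper's coloring trades the set $S$ for an \emph{ordered} list of $m/2$ vertex-slots, which over-generates (a slot may be unused or two slots may coincide) but is injective enough to give the exact constant $2$ with no small-$m$ case analysis. Your pointer variant $(2m)^m|W|^{m/2}$ is a factor $2^{m/2}$ looser than the paper precisely because $m$ pointers encode the same partition that $m/2$ colors do.
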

\begin{proof}
First, we count all such tuples $(\{k_1,\ell_1\}, \{k_2, \ell_2\},\ldots \{k_m,\ell_m\})$ in which the edge $\{i,j\}$ does not appear. Each one can be generated by choosing a tuple of vertices $(v_1, v_2, \ldots v_m)$ incident to $\{i,j\}$ and then specifying a neighbor, either $i$ or $j$, for each of them. An upper bound is obtained by counting the number of tuples $(v_1, v_2, \ldots v_m)$ such that each $v_p$ occurs at least twice and then multiplying by $2^m$. Any tuple $(v_1, v_2, \ldots v_m)$ of this form can be generated as follows. First, for each $i=1,2,\ldots, m$ we choose a color $c(i)\in\{1,2\ldots, m/2\}$. We set $v_k=v_{k'}$ whenever $c(k)=c(k')$. We then assign a neighbor of $i$ or $j$ to each color $\{1,2,\ldots, m/2\}$. Since vertices $i$ and $j$ each have at most $d$ neighbors, we see that the number of tuples $(v_1, v_2, \ldots v_m)$ such that each $v_p$ occurs at least twice is at most $(m/2)^m \cdot (2d)^{m/2}$. The number of tuples of edges $(\{k_1,\ell_1\}, \{k_2, \ell_2),\ldots \{k_m,\ell_m\}\}\in N_{ij}^{\times m}$  in which no vertex in $V\setminus\{i,j\}$ appears exactly once, and the edge $\{i,j\}$ does not occur, is then at most $2^m\cdot (m/2)^m \cdot (2d)^{m/2}$.

In order to account for the appearance of the edge $\{i,j\}$, we fix the number of places $u$ where the edge appears and then count as before for the $m-u$ places. This number is 
$$\sum_{u=0}^m {m\choose u}((m-u)\sqrt{2d})^{m-u} \leq (2m \sqrt{d})^m.$$ 
\end{proof}

Finally, using Eq. \eqref{eq:thetaijchoice} and the fact that $\|h_{ij}\|, \|P_i\|\leq 1$, we can upper bound
$$\theta^m\left|\bra{v}\Br{P_{k_1}P_{\ell_1}, \Br{\ldots,\Br{P_{k_m}P_{\ell_m},h_{ij}}}}\ket{v}\right|\leq (2\theta)^{m}.$$
Thus, the sum of all $m>1$ terms in Eq. \eqref{eq:basisexpand} has magnitude at most
\begin{equation*}
\sum_{m=2}^{\infty}\frac{1}{m!}\br{4m\sqrt{d}}^{m}\theta^m\leq \sum_{m=0}^{\infty}\br{4e\sqrt{d}\theta}^{m+2}\leq 32e^2d\theta^2
\end{equation*}
assuming $\theta\leq \frac{1}{8e\sqrt{d}}$ (where we used the bound $m^m/m!\leq e^m$). Combining with Eqs. (\ref{eq:basisexpand},\ref{eq:1storder1}) and summing over all $\{i,j\}\in E$, we get

\[
\bra{\psi}H\ket{\psi}\geq \bra{v}H\ket{v}+|E|\br{\theta \alpha - 32e^2d\theta^2}.
\]
We may then choose $\theta=O(\alpha/d)$ to get the desired lower bound.
\end{proof}

Let us now see how \thmref{prodvar} is obtained as a consequence of Theorem \ref{thm:performance}. The lower bound~\eqref{eq:r10} applies to any choice of operators $\{P_i\}_{i \in V}$. We will choose these operators in a way that gives the variance bound Eq.~\eqref{eq:varbound}. In the following, for convenience and without loss of generality, we shall work in a local basis in which our initial product state is $|v\rangle\equiv |0^n\rangle$. Our starting point is the observation that the variance of a $2$-local Hamiltonian can be expressed in this basis as 
\[
\mathrm{Var}_v(H)=\langle 0^n|HQ_1H|0^n\rangle+\langle 0^n|HQ_2H|0^n\rangle,
\]
where $Q_t$ is the projector onto computational basis states with Hamming weight $t\in \{1,2\}$. This implies that \begin{align}
    \langle 0^n|HQ_t H|0^n\rangle\geq \mathrm{Var}_v(H)/2 \label{eq:var bund Qt}
\end{align}
for some $t\in \{1,2\}$. Suppose $t=2$ and let $X_i$,$Y_i$, and $Z_i$ be the Pauli operators. We define $\alpha_1$ to be the RHS of Eq.~\eqref{eq:alpha} with $P_i=X_i$ for all $i$, and similarly $\alpha_2$ with $P_i=(X_i+Y_i)/\sqrt{2}$ for all $i$.  By a direct calculation we see that 
\begin{align}
   \a_1&=\frac{2}{|E|}\sum_{\{i,j\}\in E} \left|\mathrm{Im}\left(\langle 11|h_{ij}|00\rangle\right)\right|\nn\\
   \a_2&=\frac{2}{|E|}\sum_{\{i,j\}\in E} \left|\mathrm{Re}\left(\langle 11|h_{ij}|00\rangle\right)\right|
\end{align}
and therefore 
\begin{align*}
\langle 0^n|H Q_2 H|0^n\rangle
=\sum_{\{i,j\}\in E} |\langle 11| h_{ij}|00|^2\leq |E|\left(\frac{\alpha_1+\alpha_2}{2}\right).
\end{align*}
This means $\max\{\a_1,\a_2\}\geq |E|^{-1}\langle 0^n|H Q_2 H|0^n\rangle$ which together with Eq.~\eqref{eq:var bund Qt} implies that when $t=2$, we can efficiently find a series of operators $P_i$ such that the parameter $\alpha$ satisfies $\alpha \geq  (2|E|)^{-1}\mathrm{Var}_v(H)$. By plugging this in Eq.~\eqref{eq:r10}, we obtain $\langle \psi|H|\psi\rangle\geq \langle v|H|v\rangle+\Omega(\frac{\mathrm{Var}_v(H)^2}{d|E|})$. Thus if $t=2$ we obtain a better lower bound than the one claimed in Theorem \ref{thm:prodvar}.  Otherwise, if $t=1$, then a simple calculation (reproduced in the Supplemental Material) shows that one can efficiently compute a product state with energy at least $\langle v|H|v\rangle+\Omega(\frac{\mathrm{Var}(H)^2}{d^2|E|})$. In general, the choice between $t=1$ and $t=2$ can be efficiently determined. Thus we obtain Theorem \ref{thm:prodvar}. In the Supplemental Material, we show that if $|v\rangle$ is locally optimal for $H$, then $\langle 0^n|HQ_1H|0^n\rangle=0$ and $t=2$, so we obtain the better bound described above.

Let us briefly illustrate how these results can be applied to the quantum Max-Cut Hamiltonian considered in  Refs.~\cite{GharibianParekhMaxCut,anshu2020beyond}. The Hamiltonian is built from local terms $h_{ij}=w_{ij} \Pi_{ij}$, where $0\leq w_{ij}\leq 1$ and $\Pi_{ij}=(\iden-X_i X_j - Y_i Y_j-Z_i Z_j)/4$ is the projector onto the antisymmetric state of two qubits. This Hamiltonian has the special feature that any product state $|v\rangle$ is locally optimal, and moreover, we have $|\bra{v_i^{\perp},v_j^{\perp}}h_{ij}\ket{v_i,v_j}|=\bra{v_i,v_j}h_{ij}\ket{v_i,v_j}$. Therefore 
\[
\mathrm{Var}_v(H)=\sum_{\{i,j\}\in E} \langle v|h_{ij}|v\rangle^2\geq |E|^{-1}\langle v|H|v\rangle^2
\]
using Cauchy-Schwarz. We may then efficiently compute a state $\ket{\psi}$ such that 
\begin{align}
   \bra{\psi}H\ket{\psi}\geq  \bra{v}H\ket{v} +\Omega\left(\frac{\bra{v}H\ket{v}^4}{d|E|^3}\right).\label{eq:r1}
\end{align}
We see that if the initial state has approximation ratio $\langle v|H|v\rangle/|E|=r$ then the state $|\psi\rangle$ improves this to $r+\Omega(r^4/d)$ \footnote{A better bound can be obtained by directly computing the parameter $\a$ for a randomized choice of operators $\{P_i\}$. In that case, $\expec \a \geq \Omega(\bra{v}H\ket{v}/|E|)$ which results in an improvement of $\Omega\left(\bra{v}H\ket{v}^2/(d|E|)\right)$}.

The preceding example demonstrates the power of Theorem \ref{thm:prodvar} and shows that for the quantum Max-Cut problem, the approximation ratio of \textit{any} product state can be improved by applying a shallow quantum circuit. For more general two-local Hamiltonians, we can guarantee an improvement in the approximation ratio whenever the condition $\Var_v(H)=\Omega(|E|)$ holds, which we expect for most (but not all) product states.  Below we discuss two natural extensions of our results. First, we ask whether one can improve approximation ratios attained by more general families of quantum states. Along these lines, we provide an extension of Theorem \ref{thm:prodvar} to the more general case where $|v\rangle$ is any state prepared by a quantum circuit of depth $D=O(1)$. Next, we show how one can improve the approximation ratio achieved by a random product state $|v\rangle$. Using Theorem \ref{thm:general bound}, we show that the approximation ratio can be improved by $\Omega(1/d)$ for any Hamiltonian with nontrivial two-local interactions, and by $\Omega(1/\sqrt{d})$ if the interaction graph is triangle-free.

\paragraph{Improvement of bounded-depth states}
Recall that for any $n$-qubit quantum circuit and any qubit $j\in [n]$, we may define the lightcone $\mathcal{L}(j)\subseteq [n]$ which consists of all output qubits that are causally connected to $j$. Define the maximum lightcone size $\ell=\max_{j\in [n]} \mathcal{L}(j)$. We have $\ell\leq 2^{D}$ for any depth $D$ circuit composed of two-qubit gates. 

\begin{theorem}
\label{thm:lowdepthimp}
Let $|v\rangle=W|0^n\rangle$ where $W$ is a quantum circuit with maximum lightcone size $\ell$. There is an efficient classical algorithm that computes a quantum circuit $U$ such that $|\psi\rangle= U|v\rangle$ satisfies
\[
\langle \psi|H|\psi\rangle=\langle v|H|v\rangle+\Omega\left(\frac{\mathrm{Var}(H)^2}{\ell^{10}d^2|E|}\right).
\]
\end{theorem}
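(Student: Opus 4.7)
The plan is to reduce \thmref{lowdepthimp} to a $k$-local analogue of \thmref{prodvar} by conjugating the Hamiltonian into the $|0^n\rangle$ basis. Define the rotated Hamiltonian $\tilde H \defeq W^\dagger H W = \sum_{\{i,j\}\in E} \tilde h_{ij}$ with $\tilde h_{ij} \defeq W^\dagger h_{ij} W$. Since $W$ has forward lightcone size at most $\ell$, and by reversibility of quantum circuits also backward lightcone size at most $\ell$, each operator $\tilde h_{ij}$ acts on at most $2\ell$ input qubits, namely the union of the backward lightcones of $i$ and $j$. Thus $\tilde H$ is a $(2\ell)$-local Hamiltonian on an interaction hypergraph $G'=(V,E')$ with $|E'|=|E|$ and maximum vertex-degree $d'=O(\ell d)$: any input qubit $q$ can lie in the backward lightcone of at most $\ell$ output qubits, each of which is incident to at most $d$ edges in $G$. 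Unitary invariance of the variance then gives $\mathrm{Var}_{|0^n\rangle}(\tilde H)=\mathrm{Var}_v(H)$ and $\langle 0^n|\tilde H|0^n\rangle=\langle v|H|v\rangle$.

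I then apply a $k$-local extension of \thmref{prodvar} (of the kind the authors develop in the Supplemental Material) with $k=2\ell$ and degree $d'=O(\ell d)$ to $\tilde H$ on the product state $|0^n\rangle$. This produces a commuting-gate circuit $\tilde U$, built from exponentials of $k$-fold Pauli products on the hyperedges of $G'$, satisfying
\[
\langle 0^n|\tilde U^\dagger \tilde H\, \tilde U|0^n\rangle - \langle 0^n|\tilde H|0^n\rangle \;\geq\; \Omega\!\left(\frac{\mathrm{Var}_v(H)^2}{k^{c_1}\,(d')^{c_2}\,|E|}\right),
\]
for absolute exponents $c_1,c_2$ determined by the $k$-local nested-commutator analysis. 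Setting $U\defeq W\tilde U W^\dagger$ gives $U|v\rangle = W\tilde U|0^n\rangle$, which inherits the same energy improvement. Substituting $k=2\ell$ and $d'=O(\ell d)$ produces the bound $\Omega(\mathrm{Var}_v(H)^2/(\ell^{c_1+c_2}d^{c_2}|E|))$, matching the stated $\ell^{10}d^2$ provided $c_2=2$ and $c_1+c_2=10$, i.e., the $k$-local version incurs only a $k^8$ overhead over the $2$-local case.

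The main obstacle is establishing these exponents $c_1,c_2$ in the $k$-local variant of \thmref{performance}. The critical ingredient is the generalization of the counting claim used in the proof of \thmref{performance}: for hyperedges of size $k$ with maximum degree $d'$, one must bound the number of ordered $m$-tuples of hyperedges incident to a given central one in which every vertex outside that central hyperedge appears at least twice. A bound of the form $(O(m\,k\sqrt{d'}))^m$ (with some polynomial-in-$k$ slack) suffices to close the geometric sum, fixing an admissible angle $\theta=\Omega(\alpha/\mathrm{poly}(k,d'))$ and hence the final improvement. A secondary step is extending the variance-to-$\alpha$ reduction following \thmref{performance}: for a $k$-local Hamiltonian one writes $\mathrm{Var}_{|0^n\rangle}(\tilde H) = \sum_{t=1}^{2k}\langle 0^n|\tilde H Q_t \tilde H|0^n\rangle$ with $Q_t$ the Hamming-weight-$t$ projector, picks the dominant $t$, and chooses the single-qubit operators $\{P_i\}$ to align with the corresponding off-diagonal matrix elements of $\tilde H$, yielding $\alpha=\Omega(\mathrm{Var}_v(H)/(\mathrm{poly}(k)\,|E|))$ as required to plug into the $k$-local analogue of Eq.~\eqref{eq:r10}.
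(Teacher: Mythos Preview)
Your reduction to a $k$-local problem is natural, but it cannot deliver the exponents in the theorem as stated, and the paper does something quite different. Two concrete gaps: First, the paper's own $k$-local extension (Supplemental Material, Eq.~\eqref{eq:rr40}) yields only $\Omega\bigl(\mathrm{Var}(H)^2/(2^{O(k)}d^4|E|)\bigr)$; the authors explicitly flag the loss from $d^2$ to $d^4$ as an open question. Plugging in $k=2\ell$ and $d'=O(\ell d)$ therefore gives a denominator $2^{O(\ell)}\ell^4 d^4$, not $\ell^{10}d^2$. Second, the exponential in $\ell$ is not an artifact of loose bookkeeping: the generator $L$ in the $k$-local argument sums over $t$-subsets of hyperedges, and the number of such subsets overlapping a fixed hyperedge carries the factor $\binom{k}{t-1}$, which is genuinely $2^{\Theta(\ell)}$ when the dominant Hamming-weight sector $t$ lands near $\ell$. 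Your hoped-for counting bound $(O(mk\sqrt{d'}))^m$ does not circumvent this, because the blowup is already in the size of $L$ restricted to a neighborhood, not only in the nested-commutator combinatorics.

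The paper's proof avoids the Hamming-weight decomposition entirely. It sets $F=\sum_{j} W|1\rangle\langle 1|_j W^\dagger$, $A=i[H,F]$, and $|\psi\rangle=e^{iA\theta}|v\rangle$. The two identities $F|v\rangle=0$ and $F\succeq I-|v\rangle\langle v|$ give the first-order term $-i\langle v|[A,H]|v\rangle\theta = 2\langle v|HFH|v\rangle\theta \geq 2\theta\,\mathrm{Var}(H)$ directly, with no loss in $d$ or $\ell$. The higher-order terms are then controlled by a lightcone-growth argument (each $A_e=i[h_e,F]$ is supported on $O(\ell^2)$ qubits with $\|A_e\|\leq 2\ell$), producing a second-order cost $O(|E|(d\ell^5\theta)^2)$ and hence the $\ell^{10}d^2$ scaling after optimizing $\theta$. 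The key idea you are missing is this ``parent Hamiltonian'' trick with $F$, which replaces the combinatorial choice of $\{P_i\}$ by a single operator tailored to $|v\rangle$.
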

For constant depth circuits we have $\ell=O(1)$ and we get the same asymptotic energy improvement as we established previously in Theorem \ref{thm:prodvar} for product states. However, in this case the circuit $U$ that we construct is not constant-depth. In the Supplemental Material, we show that the improvement stated above can also be obtained for states $\ket{v}$ that are the unique ground states of a gapped local Hamiltonian $F$. In that case, $\ell$ is replaced by the locality of the Hamiltonian $F$. Thus, Theorem \ref{thm:lowdepthimp} extends to a broad class of tensor network states (such as PEPS of low bond dimension) that have a gapped parent Hamiltonian. 

The theorem provides limitations on the energy that can be achieved by any state $\ket{v}$ produced by a bounded-depth circuit. In particular, since $\bra{\psi}H\ket{\psi}\leq \lambda_{\max}(H)$, we find that 
\[
\langle v|H|v\rangle\leq \lambda_{\max}(H)-\Omega\left(\frac{\mathrm{Var}(H)^2}{\ell^{10}d^2|E|}\right).
\]
This shows that the approximation ratio achievable by constant-depth states $|v\rangle$ with $\mathrm{Var}(H)=\Omega(|E|)$ is bounded away from $1$. An interesting direction for future work is to explore whether one can use this fact to exhibit new local Hamiltonian systems with the almost-linear NLTS (No Low-energy Trivial States) property \cite{AN21, FreedmanH14}.

\paragraph{Improvement of random assignments}
  Given an instance of a (classical) constraint satisfaction problem, one may consider the trivial algorithm in which each variable is chosen independently and uniformly at random. Remarkably, efficient algorithms which improve over the approximation ratio achieved by this simple strategy are not likely to exist in the general case \cite{HastadInapprox}. On the other hand, for structured cases such as bounded-degree graphs, improvement is possible. In particular, on degree-$d$ graphs, one can efficiently find an assignment satisfying a $\m + \Omega(\frac{1}{d})$ fraction of constraints \cite{HastadBoundedOccurence}. Here $\m$ is the expected fraction of constraints satisfied by a uniformly random assignment. It has been shown that when a degree-$d$ graph is triangle-free, there are efficient ``local'' algorithms that find a binary string satisfying a $\m + \Omega(\frac{1}{\sqrt{d} })$ fraction of constraints by starting with a uniformly random assignment \cite{BarakMORRSTVW15,hastings2019BoundedDepth} or quantum superposition \cite{farhi2015quantum} and then locally updating each bit/qubit as a function of the state of its neighbors.

Below we show that this optimal dependence on $d$ can be recovered and generalized to the local Hamiltonian setting by applying our algorithm in Theorem \ref{thm:performance} to a randomly chosen product state. For randomly chosen $\ket{v}$, the parameter $\alpha$ in Theorem \ref{thm:performance} can be related to the $2$-norm of the quadratic terms in the Pauli expansion of the Hamiltonian. More precisely, for an $n$-qubit operator $O=\sum_{i<j}\sum_{x,y}f^{ij}_{xy}\sigma_x^{i}\otimes\sigma_y^{j}$ where $\sigma_0=I$ and $\{\sigma_1, \sigma_2,\sigma_3\}$ are the Pauli matrices, we define $$\mathrm{quad}(O)=\sum_{i<j}\sum_{x>0,y>0}(f^{ij}_{xy})^2.$$

\begin{theorem}
\label{thm:random}
There is an efficient randomized algorithm which computes a depth-$d+1$ circuit $U$ such that $|\psi\rangle=U|v\rangle$ satisfies 
$$\expec_{v} \bra{\psi} H\ket{\psi} \geq \expec_v \bra{v}H\ket{v} +\Omega\left(\frac{\mathrm{quad}(H)^2}{d|E|}\right).$$
 If the graph is triangle-free then the right-hand side can be replaced with $\expec_v \bra{v}H\ket{v}+\Omega\left(\frac{\mathrm{quad}(H)}{\sqrt{d}}\right)$.
\end{theorem}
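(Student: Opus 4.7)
The plan is to apply Theorem~\ref{thm:general bound} with a Haar-random product state $\ket{v}=\bigotimes_i\ket{v_i}$ together with a coordinated random choice of single-qubit operators $\{P_i\}$. A natural construction: independently sample Haar-random unitaries $U_i\in U(2)$, set $\ket{v_i}=U_i\ket{0}$ and $P_i=U_i\epsilon_i U_i^{\dagger}$ with $\epsilon_i$ uniform in $\{X,Y\}$. This automatically satisfies $\langle v_i|P_i|v_i\rangle=0$ and $\|P_i\|=1$, so Theorem~\ref{thm:general bound} applies. The algorithm then outputs the depth-$(d+1)$ circuit $V(\vec\theta)$ of Eq.~\eqref{eq:rr2} using the sign pattern from the proof of Theorem~\ref{thm:general bound} and a common magnitude $\theta^{*}$ fixed deterministically below.

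\textbf{Per-edge moment bound.} The key calculation is to lower bound $\expec_{v,P}\,|\langle v|[P_iP_j,h_{ij}]|v\rangle|$ in terms of $\mathrm{quad}(h_{ij})$. Expanding $h_{ij}=\sum_{x,y=0}^{3} f_{xy}^{ij}\sigma_x\otimes\sigma_y$, every term with $x=0$ or $y=0$ in $\langle v|P_iP_jh_{ij}|v\rangle$ carries a factor $\langle v_i|P_i|v_i\rangle=0$ or $\langle v_j|P_j|v_j\rangle=0$ and hence vanishes, leaving only contributions indexed by $(x,y)$ with $x,y\geq 1$---precisely the coefficients that enter $\mathrm{quad}(h_{ij})$. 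A direct Haar calculation (using that $U_i^{\dagger}\sigma_xU_i$ for $x\geq 1$ transforms as an $\mathrm{SO}(3)$-random rotation, and that averaging over $\epsilon_i\in\{X,Y\}$ isolates the imaginary part of $\langle v_i|P_i\sigma_x|v_i\rangle\langle v_j|P_j\sigma_y|v_j\rangle$) then yields $\expec_{v,P}|\langle v|[P_iP_j,h_{ij}]|v\rangle|^{2}\geq c\,\mathrm{quad}(h_{ij})$. Since $|\langle v|[P_iP_j,h_{ij}]|v\rangle|\leq 2$, the elementary inequality $\expec|X|\geq\expec X^{2}/\|X\|_{\infty}$ gives $\expec_{v,P}|\langle v|[P_iP_j,h_{ij}]|v\rangle|\geq c'\,\mathrm{quad}(h_{ij})$, and summing over edges delivers $\expec\alpha\geq c'\,\mathrm{quad}(H)/|E|$.

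\textbf{General bound.} Choose $\theta^{*}=c'\,\mathrm{quad}(H)/(Cd|E|)$, where $C$ is the universal constant appearing in the proof of Theorem~\ref{thm:general bound}. Taking expectations in that proof (legitimate because $\theta^{*}$ is deterministic) yields
\[
\expec\!\left[\bra{\psi}H\ket{\psi}-\bra{v}H\ket{v}\right]\geq \theta^{*}|E|\,\expec\alpha - O\!\left(d(\theta^{*})^{2}|E|\right)=\Omega\!\left(\frac{\mathrm{quad}(H)^{2}}{d|E|}\right),
\]
which establishes the first claim.

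\textbf{Triangle-free refinement and main obstacle.} For triangle-free interaction graphs, the Claim used inside the proof of Theorem~\ref{thm:general bound} admits a sharper bound: since no vertex $u\notin\{i,j\}$ can lie in both $N(i)$ and $N(j)$, the ``crossed'' edge pairings $\{i,u\},\{j,u\}\in N_{ij}$ that dominate the higher-order nested-commutator count are forbidden. A refined accounting of these contributions permits the larger angle choice $\theta\sim 1/\sqrt{d}$, producing an expected improvement of $\Omega(\mathrm{quad}(H)/\sqrt{d})$ and matching the classical $1/\sqrt{d}$ rate of \cite{BarakMORRSTVW15,hastings2019BoundedDepth}. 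The main obstacle is executing the triangle-free combinatorics at all orders $m\geq 2$ uniformly: tracking which nested-commutator configurations survive the joint constraints $\langle v_s|P_s|v_s\rangle=0$ once triangles are forbidden, and verifying that the resulting total quadratic-and-higher error is $O(\sqrt{d}\,\theta^{2}|E|)$ rather than $O(d\theta^{2}|E|)$.
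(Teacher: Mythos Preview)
Your approach to the first part (general graphs) is essentially the same as the paper's: randomize $\ket{v}$ and the orthogonal operators $P_i$, lower bound $\mathbb{E}\alpha$ by a constant times $\mathrm{quad}(H)/|E|$, fix a deterministic $\theta^*$, and take expectations in the inequality $\bra{\psi}H\ket{\psi}\ge\bra{v}H\ket{v}+|E|(\theta\alpha-Cd\theta^2)$ from the proof of Theorem~\ref{thm:general bound}. The paper's randomization of $P_i$ (a continuous phase $\mu_i\in[0,\pi/2]$) differs cosmetically from your discrete $\epsilon_i\in\{X,Y\}$, but the mechanism is identical.

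The triangle-free part, however, has a genuine gap. Your plan is to tighten the combinatorial Claim inside Theorem~\ref{thm:general bound} so that the higher-order error becomes $O(\sqrt{d}\,\theta^2|E|)$, enabling $\theta\sim 1/\sqrt{d}$. This does not work. At order $m=2$ the surviving configurations include all ``same edge twice'' pairs $(\{i,u\},\{i,u\})$ with $u\in N(i)\setminus\{j\}$ and $(\{j,w\},\{j,w\})$ with $w\in N(j)\setminus\{i\}$; there are $2(d-1)$ of these regardless of triangles, and each contributes $\langle v|[P_iP_u,[P_iP_u,h_{ij}]]|v\rangle=2\langle v|(h_{ij}-P_ih_{ij}P_i)|v\rangle$, which is $\Theta(1)$ in general. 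Triangle-freeness only kills the ``crossed'' pairs $(\{i,u\},\{j,u\})$, which are at most half the count. So the worst-case $m=2$ error remains $\Theta(d\theta^2)$ per edge, and at $\theta\sim 1/\sqrt{d}$ it swamps the first-order gain.

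The paper proceeds entirely differently here: it abandons the term-by-term commutator bound and instead derives an \emph{exact} closed form for $\langle\psi|h_{kl}|\psi\rangle$ on triangle-free graphs (the expansion of $V_{kl}$ factorizes cleanly because neighbors of $k$ and neighbors of $l$ are disjoint), yielding
\[
\langle\psi|h_{kl}|\psi\rangle=\tfrac14\mathrm{Tr}(h_{kl})+\tfrac14\mathrm{Tr}(h_{kl}Z_kZ_l)\cos^{2d-2}(2\theta)+\tfrac14\mathrm{Tr}(h_{kl}(Z_k{+}Z_l))\cos^{d}(2\theta)+\tfrac{\alpha_{kl}}{2}\sin(2\theta)\cos^{d-1}(2\theta).
\]
The point is that at $\theta\sim 1/\sqrt{d}$ the two middle terms are still $\Theta(1)$ (they are \emph{not} small), but they have \emph{zero mean} once one averages over a random initial computational-basis state and a random local Pauli basis. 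That averaging step---not an improved count---is what isolates the $\sin(2\theta)\cos^{d-1}(2\theta)=\Theta(1/\sqrt{d})$ gain. Your proposal does not contain this mechanism.
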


The proof of Theorem \ref{thm:random} is provided in the Supplemental Material. We also show that for triangle-free graphs one can efficiently compute \textit{product} states matching the approximation ratios quoted above using a local classical algorithm similar to the ones described in Refs.~ \cite{hastings2019BoundedDepth,BarakMORRSTVW15}. Thus, low depth quantum circuits are not necessary to achieve the asymptotic $\Omega(1/\sqrt{d})$ scaling in this case. Nevertheless, one may take the output product state of such algorithms and improve it further using the shallow quantum circuit from Theorem \ref{thm:prodvar}.

\paragraph{Discussion}
For local Hamiltonian problems on bounded-degree graphs, we showed that the approximation ratio achieved by a product state can be improved by a shallow quantum circuit, assuming a mild condition on its variance. Our quantum algorithm generalizes the level-1 Quantum Approximate Optimization Algorithm (QAOA) and extends its applicability beyond classical cost functions. By applying our algorithm to randomly chosen product states we generalized known algorithms for bounded-occurrence classical constraint satisfaction problems. Our results quantify the improvement that shallow quantum circuits can provide over methods based on product states.

\paragraph{Acknowledgments}
 AA acknowledges support from the NSF QLCI program through grant number OMA-2016245. DG acknowledges the support of the Natural Sciences and Engineering Research Council of Canada through grant number RGPIN-2019-04198, the Canadian Institute for Advanced Research, and IBM Research. KJMK acknowledges support from NSERC Vanier Canada Graduate Scholarship. MS was supported by NSF grant CCF-1729369, a Samsung Advanced Institute of Technology Global Research Cluster and grant number FXQi-RFP-1811A from the Foundational Questions Institute and Fetzer Franklin Fund, a donor advised fund of Silicon Valley Community Foundation. 
\bibliography{References}

%%%%%%%%%% Merge with supplemental materials %%%%%%%%%%
% \pagebreak
% \widetext
\widetext
 \clearpage
\begin{center}
\textbf{\large Supplemental Materials}
\end{center}
%%%%%%%%%% Merge with supplemental materials %%%%%%%%%%
%%%%%%%%%% Prefix a "S" to all equations, figures, tables and reset the counter %%%%%%%%%%
\setcounter{equation}{0}
\setcounter{figure}{0}
\setcounter{table}{0}
\makeatletter
\renewcommand{\theequation}{S\arabic{equation}}
\renewcommand{\thefigure}{S\arabic{figure}}
\renewcommand{\bibnumfmt}[1]{[S#1]}
\renewcommand{\citenumfont}[1]{S#1}

%%%%%%%%%% Supplemental Materials %%%%%%%%%%
\section{Improvement of product states}

In this section we provide the full details of the proof of Theorem \ref{thm:prodvar}. It will be convenient to work in a local basis defined by $|v\rangle$, such that $|v\rangle=|0^n\rangle$ and
\[
\mathrm{Var}_v(H)= \langle 0^n|H^2|0^n\rangle-(\langle 0^n|H|0^n\rangle)^2.
\]
For ease of notation we write $\mathrm{Var}(H)=\mathrm{Var}_v(H)$. Recall the quantity $\alpha$ defined in Eq.~\eqref{eq:alpha}:
\begin{align}
    \alpha=\expec_{\{i,j\}\in E}|\bra{v_i,v_j}[P_iP_j, h_{ij}]\ket{v_i,v_j}|,\label{eq:alpha2}
\end{align}
We will use the following proposition.
\begin{prop}
Let $Q_2$ be the projector onto computational basis states with Hamming weight 2. We can efficiently choose operators $\{P_i\}_{i\in V}$ such that 
\begin{align}
\alpha\geq \frac{1}{|E|}\cdot \langle 0^n|H Q_2 H|0^n\rangle.\label{eq:alphaQ2}
\end{align}
\label{prop:q2}
\end{prop}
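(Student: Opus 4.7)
The plan is to exhibit two concrete choices of the single-qubit operators $\{P_i\}$, compute $\alpha$ for each, and argue that at least one of them must satisfy Eq.~\eqref{eq:alphaQ2}. The underlying observation is that for each edge the only piece of $h_{ij}$ that matters on the right-hand side is the single complex number $z_{ij}:=\langle 11|h_{ij}|00\rangle$, whose real and imaginary parts can each be accessed by an appropriate choice of $P_i$. Working in the local basis where $\ket{v}=\ket{0^n}$, the vector $Q_2 H\ket{0^n}$ picks up from each $h_{ij}\ket{0^n}$ only its Hamming-weight-$2$ component, namely $z_{ij}\ket{11}_{ij}\ket{0^{n-2}}_{\text{rest}}$; distinct edges place this component on orthogonal computational basis states, so cross-terms vanish and $\langle 0^n|HQ_2H|0^n\rangle = \sum_{\{i,j\}\in E}|z_{ij}|^2$.

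Then I would compute $\alpha$ for two choices. With $P_i=X_i$, using $X_iX_j\ket{00}=\ket{11}$ and Hermiticity of $h_{ij}$, one obtains $\langle 00|[X_iX_j,h_{ij}]|00\rangle = z_{ij}-z_{ij}^* = 2i\,\mathrm{Im}(z_{ij})$ and hence
$$
\alpha_1 = \frac{2}{|E|}\sum_{\{i,j\}\in E}|\mathrm{Im}(z_{ij})|.
$$
With $P_i=(X_i+Y_i)/\sqrt{2}$, which still satisfies $\langle 0|P_i|0\rangle=0$ since $Y\ket{0}=i\ket{1}$, the identity $P_iP_j\ket{00}=i\ket{11}$ gives $\langle 00|[P_iP_j,h_{ij}]|00\rangle = -2i\,\mathrm{Re}(z_{ij})$ and hence
$$
\alpha_2 = \frac{2}{|E|}\sum_{\{i,j\}\in E}|\mathrm{Re}(z_{ij})|.
$$

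Finally I would combine the two estimates. The assumption $\|h_{ij}\|\leq 1$ forces $|z_{ij}|\leq 1$, so $|z_{ij}|^2 \leq |z_{ij}| \leq |\mathrm{Re}(z_{ij})|+|\mathrm{Im}(z_{ij})|$; summing over edges yields $\langle 0^n|HQ_2H|0^n\rangle \leq \tfrac{|E|}{2}(\alpha_1+\alpha_2)$, so $\max(\alpha_1,\alpha_2)\geq |E|^{-1}\langle 0^n|HQ_2H|0^n\rangle$. Since both $\alpha_1$ and $\alpha_2$ are sums of edge-local matrix elements, they can be evaluated in polynomial time, and we simply pick whichever choice is larger. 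The only care needed anywhere in the argument is the bookkeeping of phase factors in the two commutator calculations; beyond that I foresee no real obstacle.
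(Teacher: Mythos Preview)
Your proposal is correct and follows essentially the same route as the paper: the same two choices $P_i=X_i$ and $P_i=(X_i+Y_i)/\sqrt{2}$, the same identification $\langle 0^n|HQ_2H|0^n\rangle=\sum_{\{i,j\}\in E}|z_{ij}|^2$, and the same use of $\|h_{ij}\|\le 1$ to pass from $|z_{ij}|^2$ to $|\mathrm{Re}(z_{ij})|+|\mathrm{Im}(z_{ij})|$ before averaging $\alpha_1$ and $\alpha_2$.
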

\begin{proof}
Let $\alpha_1$ be Eq.~\eqref{eq:alpha} with $P_i=X_i$ for all $i$, and let  $\alpha_2$ be Eq.~\eqref{eq:alpha} with $P_i=(X_i+Y_i)/\sqrt{2}$ for all $i$. Direct calculation shows that
\begin{align}
   \a_1&=\frac{2}{|E|}\sum_{\{i,j\}\in E} \left|\mathrm{Im}\left(\langle 11|_{ij}\otimes \langle 0^{n-2}| h_{ij}|0^n\rangle\right)\right|\nn\\
   \a_2&=\frac{2}{|E|}\sum_{\{i,j\}\in E} \left|\mathrm{Re}\left(\langle 11|_{ij}\otimes \langle 0^{n-2}| h_{ij}|0^n\rangle\right)\right|
\end{align}

We can express $\langle 0^n|H Q_2 H|0^n\rangle$ as
\begin{align*}
\langle 0^n|H Q_2 H|0^n\rangle.
&=\sum_{\{i,j\}\in E} |\langle 11|_{ij}\otimes \langle 0^{n-2}| h_{ij}|0^n\rangle|^2\\
&=  \sum_{\{i,j\}\in E} \left(\mathrm{Im}\left(\langle 11|_{ij}\otimes \langle 0^{n-2}| h_{ij}|0^n\rangle\right)\right)^2+\left(\mathrm{Re}\left(\langle 11|_{ij}\otimes \langle 0^{n-2}| h_{ij}|0^n\rangle\right)\right)^2\\
&\leq \sum_{\{i,j\}\in E} \left|\mathrm{Im}\left(\langle 11|_{ij}\otimes \langle 0^{n-2}| h_{ij}|0^n\rangle\right)\right|+\left|\mathrm{Re}\left(\langle 11|_{ij}\otimes \langle 0^{n-2}| h_{ij}|0^n\rangle\right)\right|\\
&=|E|\cdot\frac{\alpha_1+\alpha_2}{2},
\end{align*}
where we used the fact that $\|h_{ij}\|\leq 1$ in going from the second to the third line above.  Now the last line implies that either $\alpha_1$ or $\alpha_2$ achieves the bound from Eq.~\eqref{eq:alphaQ2}. Moreover, the choice of $\a_1$ or $\a_2$ can be efficiently determined. 
\end{proof}

\begin{proof}[Proof of Theorem \ref{thm:prodvar}]
Let $Q_t$ be the projector onto computational basis states with Hamming weight $t\in \{1,2\}$. Since $H$ is two-local we have
\[
\mathrm{Var}(H)=\langle 0^n|HQ_1H|0^n\rangle+\langle 0^n|HQ_2H|0^n\rangle.
\]
Therefore $\langle 0^n|HQ_t H|0^n\rangle\geq \mathrm{Var}(H)/2$ for some $t\in \{1,2\}$. If $t=2$ then we may use Proposition \ref{prop:q2} which gives
\[
\max\{\alpha_1,\alpha_2\}\geq \frac{1}{2|E|}\mathrm{Var}(H).
\]
Combining this with Theorem \ref{thm:performance}, we arrive at
\[
\langle \psi|H|\psi\rangle\geq \langle 0^n|H|0^n\rangle+\Omega\left(\frac{\mathrm{Var}(H)^2}{d|E|}\right) 
\]
which is better than the desired lower bound.

Next suppose $\langle 0^n|HQ_1 H|0^n\rangle\geq \mathrm{Var}(H)/2$. Define
\[
L=\sum_{j=1}^{n} (-1)^{a_j} P_j
\] 
where each $P_j$ is a single-qubit Pauli operator acting nontrivially only on qubit $j$, and $a_j\in \{0,1\}$ is chosen so that
\[
i\langle 0^n|(-1)^{a_j} [P_j,H]|0^n\rangle = |\langle 0^n| [P_j,H]|0^n\rangle|.
\] 
Define $|\theta\rangle=e^{-i\theta L}|0^n\rangle$ where $\theta$ is a real parameter that we will fix later. Then
\[
\langle \theta|H|\theta\rangle=\langle 0^n|H|0^n\rangle+\theta \sum_{j=1}^n|\langle 0^n|[P_j, H]|0^n\rangle| +\mathrm{Err},
\]
where 
\begin{align*}
|\mathrm{Err}| &=\left|\sum_{m\geq 2} \frac{i^m \theta^m}{m!} \langle 0^n|[L,H]_m|0^n\rangle\right|\\
&\leq |E|\sum_{m\geq 2} \frac{\theta^m 4^m}{m!}\\
&\leq 16\theta^2|E| e^{4\theta}.
\end{align*}
In the second line we used the fact that 
\[
[L,h_{ij}]_m=[(-1)^{a_i}P_i+(-1)^{a_j}P_j, h_{ij}]_m
\]
can be expanded as a sum of $2^m$ terms each of norm at most $2^m$. Now define
\begin{equation}
\beta=\frac{1}{|E|} \sum_{j=1}^n|\langle 0^n|[P_j, H]|0^n\rangle|.
\label{eq:beta1}
\end{equation}
and note that since $\|[P_j,H]\|\leq 2d$ for all $j$ we have
\[
\beta\leq \frac{1}{|E|}\sum_{j=1}^{n} 2d\leq 4.
\]
Then
\[
\langle \theta|H|\theta\rangle\geq \langle 0^n|H|0^n\rangle+|E|\left(\theta\beta-16\theta^2e^{4\theta}\right).
\]
Choosing $\theta=\beta/32$ gives
\begin{align}
\langle \theta|H|\theta\rangle & \geq\langle 0^n|H|0^n\rangle+|E|\left(\frac{\beta^2}{32}-\frac{\beta^2}{64} e^{\beta/8}\right)\nonumber\\
& \geq \langle 0^n|H|0^n\rangle+|E|\left(\frac{\beta^2}{32}-\frac{\beta^2}{64} e^{1/2}\right)\nonumber\\
& \geq \langle 0^n|H|0^n\rangle+0.001\cdot|E|\beta^2.\label{eq:betabound1}
\end{align}
Now let $\beta_1$ be given by Eq.~\eqref{eq:beta1} with $P_i=X_i$ for all $i$, and let $\beta_2$ be given by Eq.~\eqref{eq:beta1} with $P_i=Y_i$ for all $i$. Then

\begin{align*}
\frac{\beta_1+\beta_2}{2}& =\frac{1}{2|E|}\sum_{j=1}^{n} \left| \langle 0^n|[X_j, H]|0^n\rangle\right|+\left| \langle 0^n|[Y_j,H]|0^n\rangle\right|\\
&\geq \frac{1}{4d|E|}\sum_{j=1}^{n} | \langle 0^n|[X_j, H]|0^n\rangle|^2+\left| \langle 0^n|[Y_j,H]|0^n\rangle\right|^2\\
&=\frac{1}{4d|E|}\sum_{j=1}^{n} \left| 2\mathrm{Im}\left(\langle\hat{e}_j|H|0^n\rangle\right)\right|^2+\left| 2\mathrm{Re}\left(\langle\hat{e}_j|H|0^n\rangle\right)\right|^2\\
&=\frac{1}{d|E|} \langle 0^n |H Q_1 H|0^n\rangle\\
& \geq \frac{1}{2d|E|}\mathrm{Var}(H).
\end{align*}

Therefore either $\beta_1$ or $\beta_2$ is larger than the RHS above. Plugging this into Eq.~\eqref{eq:betabound1} we arrive at
\[
\langle \theta|H|\theta\rangle \geq \langle 0^n|H|0^n\rangle+\Omega\left(\frac{\mathrm{Var}(H)^2}{d^2|E|}\right).
\]
\end{proof}

Finally, let us discuss a special case in which the bound from Theorem \ref{thm:prodvar} can be improved. We say that a product state $|v\rangle$ is locally optimal for $H$ if, for any single-qubit Pauli $Q$ we have
\[
\frac{d}{d\phi} \langle v|e^{-i\phi Q} H e^{i\phi Q}|v\rangle \big|_{\phi=0} =0,
\]
or equivalently
\begin{equation}
\langle v|[Q,H]|v\rangle=0.
\label{eq:qcom}
\end{equation}
As in the above, for simplicity we shall work in a local basis defined by $v$, so that $|v\rangle=|0^n\rangle$.

\begin{claim}
Suppose $|0^n\rangle$ is locally optimal for $H$. Then for any string $z\in \{0,1\}^n$ with Hamming weight $|z|=1$ we have
\begin{equation}
\langle z|H|0^n\rangle=0.
\label{eq:zh}
\end{equation}
\label{claim:no1}
\end{claim}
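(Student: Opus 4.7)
The plan is to exploit local optimality directly by plugging in the two single-qubit Paulis $X_j$ and $Y_j$ (for the unique qubit $j$ where the Hamming-weight-$1$ string $z$ is nonzero) into the commutator condition. Since $|v\rangle = |0^n\rangle$ in the chosen basis, we have $X_j|0^n\rangle = |e_j\rangle$ and $Y_j|0^n\rangle = i|e_j\rangle$, where $e_j$ denotes the string with a single $1$ at position $j$. These identities let us translate the commutator expectations $\langle 0^n|[X_j,H]|0^n\rangle$ and $\langle 0^n|[Y_j,H]|0^n\rangle$ into simple linear combinations of the amplitude $\langle e_j|H|0^n\rangle$ and its complex conjugate.

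The first step is to compute
\begin{equation*}
\langle 0^n|[X_j,H]|0^n\rangle = \langle e_j|H|0^n\rangle - \langle 0^n|H|e_j\rangle = 2i\,\mathrm{Im}\,\langle e_j|H|0^n\rangle,
\end{equation*}
and similarly
\begin{equation*}
\langle 0^n|[Y_j,H]|0^n\rangle = -i\langle e_j|H|0^n\rangle - i\langle 0^n|H|e_j\rangle = -2i\,\mathrm{Re}\,\langle e_j|H|0^n\rangle.
\end{equation*}
Local optimality asserts that both of these are zero, so both the real and imaginary parts of $\langle e_j|H|0^n\rangle$ vanish. This gives $\langle e_j|H|0^n\rangle = 0$, and since $z = e_j$ for some $j$, the claim follows.

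There is essentially no obstacle here beyond unpacking definitions; the argument is a two-line consequence of the local optimality condition once one chooses the right Paulis. The only mild care needed is to verify the commutator-to-amplitude translation with the correct phases from $Y_j|0\rangle = i|1\rangle$, so that the two conditions really do separate into the real and imaginary parts of the single complex number $\langle e_j|H|0^n\rangle$. After this claim is established, it feeds into the main theorem because it forces $\langle 0^n|HQ_1 H|0^n\rangle = \sum_j|\langle e_j|H|0^n\rangle|^2 = 0$, so the entire variance is concentrated in the weight-$2$ sector $Q_2$, and the stronger bound $\Omega(\mathrm{Var}_v(H)^2/(d|E|))$ from Proposition~\ref{prop:q2} applies.
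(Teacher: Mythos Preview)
Your proof is correct and is essentially identical to the paper's own argument: both pick the qubit $j$ where $z$ is nonzero, plug $X_j$ and $Y_j$ into the local-optimality commutator condition, and read off that the imaginary and real parts of $\langle e_j|H|0^n\rangle$ vanish. Your explicit phase bookkeeping ($X_j$ gives $2i\,\mathrm{Im}$, $Y_j$ gives $-2i\,\mathrm{Re}$) is consistent with the paper's absolute-value formulation, and your closing remark about forcing all the variance into the $Q_2$ sector is exactly how the claim is used downstream.
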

\begin{proof}
Without loss of generality consider the case where $z=10^{n-1}$. Then
\[
|2\mathrm{Im}(\langle z|H|0^n\rangle)|=|\langle0^n |[X_1,H]|0^n\rangle|=0 \quad \text{and} \quad |2\mathrm{Re}(\langle z|H|0^n\rangle)|=|\langle0^n |[Y_1,H]|0^n\rangle|=0,
\]
where we used Eq.~\eqref{eq:qcom}. 
\end{proof}

\begin{claim}
Suppose $|0^n\rangle$ is locally optimal for $H$. We may efficiently choose $\{P_i\}$ and $\{\theta_{ij}\}$ such that
\begin{equation}
\langle \psi|H|\psi\rangle\geq \langle 0^n|H|0^n\rangle+ \Omega\left(\frac{\mathrm{Var}(H)^2}{d|E|}\right).
\label{eq:alphabnd}
\end{equation}
\end{claim}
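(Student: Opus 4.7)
The plan is to combine the preceding Claim~\ref{claim:no1} with Proposition~\ref{prop:q2} and Theorem~\ref{thm:performance}, and to observe that local optimality of $|0^n\rangle$ forces the entire variance to be carried by the Hamming-weight-$2$ sector. This removes the factor of $d$ loss that appears in the general case of Theorem~\ref{thm:prodvar}, where one has to separately handle the possibility that the weight-$1$ sector dominates.

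Concretely, I would proceed as follows. First, since $H$ is $2$-local and $|0^n\rangle$ is a product state, the decomposition
\[
\mathrm{Var}(H)=\langle 0^n|HQ_1H|0^n\rangle+\langle 0^n|HQ_2H|0^n\rangle
\]
holds (only Hamming weights $1$ and $2$ can be produced from $|0^n\rangle$ by a two-local Hamiltonian after subtracting the mean). By Claim~\ref{claim:no1}, local optimality gives $\langle z|H|0^n\rangle=0$ for every weight-$1$ string $z$, so $Q_1 H|0^n\rangle = 0$ and therefore
\[
\langle 0^n|HQ_2H|0^n\rangle=\mathrm{Var}(H).
\]

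Second, apply Proposition~\ref{prop:q2}: one can efficiently select single-qubit operators $\{P_i\}_{i\in V}$ (with $P_i\in\{X_i,(X_i+Y_i)/\sqrt{2}\}$, uniform over $i$) such that the parameter $\alpha$ from Eq.~\eqref{eq:alpha2} satisfies
\[
\alpha\geq \frac{1}{|E|}\langle 0^n|HQ_2H|0^n\rangle=\frac{\mathrm{Var}(H)}{|E|}.
\]
Plugging this choice of $\{P_i\}$ into Theorem~\ref{thm:performance}, the prescribed efficient choice of parameters $\vec{\theta}$ from Eq.~\eqref{eq:thetaijchoice} yields a state $|\psi\rangle=V(\vec\theta)|0^n\rangle$ with
\[
\langle \psi|H|\psi\rangle\geq \langle 0^n|H|0^n\rangle+\Omega\!\left(\frac{|E|\,\alpha^2}{d}\right)\geq \langle 0^n|H|0^n\rangle+\Omega\!\left(\frac{\mathrm{Var}(H)^2}{d|E|}\right),
\]
which is exactly Eq.~\eqref{eq:alphabnd}.

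There is no real obstacle here beyond unpacking the consequences of local optimality; the entire proof is a short assembly of earlier results. The only subtlety worth flagging is that in the general proof of Theorem~\ref{thm:prodvar}, the extra $1/d$ factor arose solely from the single-qubit rotation strategy used to handle a dominant weight-$1$ contribution to the variance. Once that contribution is ruled out via Claim~\ref{claim:no1}, one can go directly through the two-qubit commuting-gate circuit $V(\vec\theta)$ without that loss, which is precisely what produces the improved bound $\Omega(\mathrm{Var}(H)^2/(d|E|))$.
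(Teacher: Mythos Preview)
Your proof is correct and follows essentially the same route as the paper: use Claim~\ref{claim:no1} to kill the weight-$1$ contribution so that $\langle 0^n|HQ_2H|0^n\rangle=\mathrm{Var}(H)$, then combine Proposition~\ref{prop:q2} with Theorem~\ref{thm:performance}. The paper's proof is just a terser version of what you wrote.
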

\begin{proof}
Since $H$ is two-local we have
\[
\mathrm{Var}(H)=\langle0^n|HQ_1H|0^n\rangle+\langle 0^n|HQ_2H|0^n\rangle=\langle 0^n|HQ_2H|0^n\rangle,
\]
where in the last equality we used claim \ref{claim:no1}. The claim then follows directly by combining Proposition \ref{prop:q2} and Theorem~\ref{thm:performance}.
\end{proof}

\section{Improvement of random states}
We prove the first part of Theorem \ref{thm:random} regarding general degree-$d$ graphs, which is implied by the following lemma. 
\begin{lemma}
\label{lem:randprod}
Let $\ket{v}=|v_1\rangle\otimes|v_2\rangle\ldots \otimes |v_n\rangle$ where each $v_i$ is a Haar random single-qubit state. Then there is an efficient randomized process with random coins $r$, that constructs the matrices $P_i$ (depending on both $r$ and $\ket{v}$) such that the resulting state $\ket{\psi_{r,v}}$ satisfies
$$\expec_{r,v} \bra{\psi_{r,v}}H\ket{\psi_{r,v}} \geq \expec_v \bra{v}H\ket{v} + \Omega\left(\frac{\mathrm{quad}(H)^2}{d|E|}\right).$$
\end{lemma}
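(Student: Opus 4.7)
The plan is to reduce to Theorem~\ref{thm:general bound}, which already converts any choice of $\{P_i\}$ into an energy improvement of $\Omega(|E|\alpha^2/d)$. To hit the target $\Omega(\mathrm{quad}(H)^2/(d|E|))$ it therefore suffices to exhibit an efficient construction of $\{P_i\}$ with $\expec_v[\alpha]\ge\Omega(\mathrm{quad}(H)/|E|)$, since Jensen's inequality then gives $\expec_v[\alpha^2]\ge\Omega(\mathrm{quad}(H)^2/|E|^2)$, which is what is needed.

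First I would rotate to the local frame. Let $U_i$ be a single-qubit unitary with $U_i\ket 0=\ket{v_i}$, and set $\tilde h_{ij}=(U_i\otimes U_j)^\dagger h_{ij}(U_i\otimes U_j)$. Mirroring the proof of Theorem~\ref{thm:prodvar}, I would try both $P_i=U_iX U_i^\dagger$ (defining $\alpha_1$) and $P_i=U_i(X+Y)U_i^\dagger/\sqrt 2$ (defining $\alpha_2$); both satisfy $\bra{v_i}P_i\ket{v_i}=0$ and $\|P_i\|\le1$. Proposition~\ref{prop:q2} applied in the rotated basis then yields
\begin{align*}
\max(\alpha_1,\alpha_2)\;\ge\;\frac{1}{|E|}\sum_{\{i,j\}\in E}\bigl|\bra{11}\tilde h_{ij}\ket{00}\bigr|^2,
\end{align*}
where only Pauli coefficients $f^{ij}_{xy}$ with $x,y\in\{1,2,3\}$ survive on the right-hand side since $\bra 1 I\ket 0=0$.

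The main technical step is the Haar average. Using the standard unitary 2-design identity
\begin{align*}
\expec_U\bigl[U^\dagger\sigma_x U\otimes U^\dagger\sigma_{x'} U\bigr]=\tfrac{\delta_{xx'}}{3}\bigl(2\,\mathrm{SWAP}-I\bigr),\qquad x,x'\in\{1,2,3\},
\end{align*}
one obtains the matrix-element formula $\expec\bigl[\bra{1}U^\dagger\sigma_xU\ket 0\,\overline{\bra{1}U^\dagger\sigma_{x'}U\ket 0}\bigr]=\tfrac{2}{3}\delta_{xx'}$. Expanding $|\bra{11}\tilde h_{ij}\ket{00}|^2$ in Paulis and averaging independently over $U_i$ and $U_j$, all cross terms with $(x,y)\neq(x',y')$ annihilate, leaving
\begin{align*}
\expec_v\sum_{\{i,j\}\in E}\bigl|\bra{11}\tilde h_{ij}\ket{00}\bigr|^2 \;=\;\frac{4}{9}\sum_{\{i,j\}\in E}\sum_{x,y>0}(f^{ij}_{xy})^2 \;=\;\tfrac{4}{9}\,\mathrm{quad}(H).
\end{align*}
Hence $\expec_v\max(\alpha_1,\alpha_2)\ge \tfrac{4\,\mathrm{quad}(H)}{9|E|}$.

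Finally, the randomized process I propose is: given the Haar-random input $\ket v$, compute $\alpha_1$ and $\alpha_2$ from $\ket v$ and $H$, deterministically pick the larger (so the coins $r$ are vacuous or used only to break ties), set $\{P_i\}$ accordingly, and invoke Theorem~\ref{thm:general bound} to produce $\ket{\psi_{r,v}}$. That theorem guarantees an energy improvement of $\Omega(|E|\max(\alpha_1,\alpha_2)^2/d)$ for every $\ket v$, so taking $\expec_v$ and applying Jensen's inequality delivers the claimed lower bound. The main obstacle I anticipate is the Haar-average computation in the third paragraph: one must verify that the two-design identity cleanly annihilates all off-diagonal Pauli cross terms in the independent averages over $U_i$ and $U_j$, ensuring that the surviving quantity is exactly $\mathrm{quad}(H)$ rather than the coarser $\sum_e\|h_e\|_F^2$—this is precisely what yields the sharp dependence on the quadratic Pauli mass in the final bound.
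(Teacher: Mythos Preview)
Your proposal is correct, and the overall strategy coincides with the paper's: reduce to Theorem~\ref{thm:general bound}, show $\expec[\alpha]\ge\Omega(\mathrm{quad}(H)/|E|)$ via the Haar average of $|\langle v_i^\perp,v_j^\perp|h_{ij}|v_i,v_j\rangle|^2$, and then apply Jensen. The Haar computation you outline is right; the 2-design identity does annihilate the off-diagonal Pauli terms and yields exactly the $4/9$ coefficient the paper also obtains.

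Where you differ is tactical. The paper introduces genuinely random phases $\mu_i\in[0,\pi/2]$ and sets $P_i=e^{i\mu_i}\ket{v_i}\bra{v_i^\perp}+\text{h.c.}$, then averages over $\mu$ first (picking up a factor $\ge 2/5$ from $\expec_\mu|\sin(\mu_i+\mu_j+\kappa)|$) before doing the Haar integral via a Weingarten-style system of four linear equations in the coefficients $a,b,c,d$. You instead recycle Proposition~\ref{prop:q2}: try the two fixed rotations $X$ and $(X+Y)/\sqrt2$ in the local frame, pick the better one, and land directly on $\sum_e|\langle 11|\tilde h_{ij}|00\rangle|^2$, whose Haar average you evaluate with the single-qubit 2-design identity. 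Your route is slightly tighter (constant $4/9$ versus the paper's $16/45$) and makes the randomness $r$ essentially vacuous, at the cost of relying on Proposition~\ref{prop:q2} rather than being self-contained. Both routes compute the same Haar second moment and feed it into the same convexity step, so the distinction is one of packaging rather than substance.
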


\begin{proof}
Pick $\ket{v}=\otimes_i\ket{v_i}$, where each $\ket{v_i}$ is chosen uniformly at random from Haar measure on qubits. Also choose $n$ uniformly random real numbers $\mu_i$ i.i.d in the interval $[0,\frac{\pi}{2}]$. The latter choice is made using the coins $r$. Given $\ket{v},r$  define $P_i= e^{i \mu_i}\ket{v_i}\bra{v_i^{\perp}}+e^{-i \mu_i}\ket{v^{\perp}_i}\bra{v_i}$ (we drop the labels $\ket{v},r$ from $P_i$ for convenience). Observe that $\bra{v_i}P_i\ket{v_i}=0$ and $\|P_i\|\leq 1$, as required. Then $\alpha_{v,r}$ (as given in Eq~\eqref{eq:alpha} ) can be evaluated to be 
\beqar
\label{eq:alphavrdef}
\alpha_{v,r}&=&\expec_{\{i,j\}\in E}\left|\br{e^{i(\mu_i+\mu_j)}\bra{v_i^{\perp},v_j^{\perp}}h_{ij}\ket{v_i,v_j} -e^{-i(\mu_i+\mu_j)}\bra{v_i,v_j}h_{ij}\ket{v^{\perp}_i,v^{\perp}_j}}\right|\nonumber\\
&=& 2\expec_{\{i,j\}\in E}\left|\text{Im}\br{e^{i(\mu_i+\mu_j)}\bra{v_i^{\perp},v_j^{\perp}}h_{ij}\ket{v_i,v_j}}\right|.
\enqar
Let $\bra{v_i^{\perp},v_j^{\perp}}h_{ij}\ket{v_i,v_j}= e^{i \kappa_{i,j}}|\bra{v_i^{\perp},v_j^{\perp}}h_{ij}\ket{v_i,v_j}|$ be the polar decomposition. Then  
$$|\text{Im}\br{e^{i(\mu_i+\mu_j)}\bra{v_i^{\perp},v_j^{\perp}}h_{ij}\ket{v_i,v_j}}|=|\sin\br{\mu_i+\mu_j+\kappa_{i,j}}|\cdot |\bra{v_i^{\perp},v_j^{\perp}}h_{ij}\ket{v_i,v_j}|.$$
Note that
\beqarst
\expec_r|\sin\br{\mu_i+\mu_j+\kappa_{i,j}}|= \frac{4}{\pi^2}\int_{0}^{\frac{\pi}{2}}\int_{0}^{\frac{\pi}{2}}|\sin\br{\mu_i+\mu_j+\kappa_{i,j}}|d\mu_id\mu_j\geq \frac{2}{5},
\enqarst
for all $\kappa_{i,j}$. Then Eq \ref{eq:alphavrdef} ensures that
$$\expec_r\alpha_{v,r}= 2\expec_{\{i,j\}\in E}\expec_r|\text{Im}\br{e^{i(\mu_i+\mu_j)}\bra{v_i^{\perp},v_j^{\perp}}h_{ij}\ket{v_i,v_j}}|\geq \frac{4}{5}\cdot\expec_{\{i,j\}\in E}|\bra{v_i^{\perp},v_j^{\perp}}h_{ij}\ket{v_i,v_j}|.$$
 Then we can evaluate 
\beqar
\label{eq:haarval}
&&\expec_{v,r} \alpha_{v,r}\geq\frac{4}{5}\cdot\expec_{\{i,j\}\in E}\int |\br{\bra{v_i^{\perp},v_j^{\perp}}h_{ij}\ket{v_i,v_j}}|dv_i dv_j\nonumber\\
&&\geq \frac{4}{5}\cdot\expec_{\{i,j\}\in E}\int |\bra{v_i^{\perp},v_j^{\perp}}h_{ij}\ket{v_i,v_j}|^2dv_i dv_j\nonumber\\
&&= \frac{4}{5}\cdot\expec_{\{i,j\}\in E}\int \tr\br{\ketbra{v_i^{\perp},v_j^{\perp}}{v_i^{\perp},v_j^{\perp}}h_{ij}\ketbra{v_i,v_j}{v_i,v_j}h_{ij}}dv_i dv_j\nonumber\\
&&=\frac{4}{5}\cdot\expec_{\{i,j\}\in E}\int \bra{11}\br{U_i^{\dagger}\otimes V_j^{\dagger}}h_{ij}\br{U_i\otimes V_j}\ketbra{00}{00}\br{U_i^{\dagger}\otimes V_j^{\dagger}}h_{ij}\br{U_i\otimes V_j}\ket{11}dU_i dV_j\nonumber\\
&&=\frac{4}{5}\cdot\expec_{\{i,j\}\in E}\int \bra{1100}\br{U_{i1}^{\dagger}\otimes V_{j1}^{\dagger}\otimes U_{i2}^{\dagger}\otimes V_{j2}^{\dagger}}h_{i1,j1}\otimes h_{i2,j2}\br{U_{i1}\otimes V_{j1}\otimes U_{i2}\otimes V_{j2}}\ket{0011}dU_i dV_j,\nonumber\\
\enqar
where in the second last equality we fixed a basis $\{\ket{0},\ket{1}\}$ for each qubit and introduced random unitaries $U_i, V_j$ that specify $\ket{v_i}=U_i\ket{0}, \ket{v_j}=V_j\ket{0}$. Using the well known properties of Haar integral, we have
\beqar
\label{eq:haarint}
&&\int \br{U_{i1}^{\dagger}\otimes V_{j1}^{\dagger}\otimes U_{i2}^{\dagger}\otimes V_{j2}^{\dagger}}h_{i1,j1}\otimes h_{i2,j2}\br{U_{i1}\otimes V_{j1}\otimes U_{i2}\otimes V_{j2}}dU_i dV_j\nonumber\\
&&= a \id_{i1,i2}\otimes\id_{j1,j2} + b S_{i1,i2}\otimes\id_{j1,j2} + c \id_{i1,i2}\otimes S_{j1,j2} + d S_{i1,i2}\otimes S_{j1,j2}.
\enqar
Above, $\id$ is the identity operator, $S$ is the swap operator and the subscripts represent the qubits on which the operator acts. Coefficients $a,b,c,d$ can be evaluated using the following system of equations, obtained from Eq. \ref{eq:haarint} by tracing each of the four operators.
\beqarst
&&\br{\tr_{i,j}{h_{ij}}}^2= 16a + 8b + 8c + 4d\\
&&\tr_j\br{\tr_i{h_{ij}}\tr_i{h_{ij}}}= 8a+4b+16c+8d\\
&&\tr_i\br{\tr_j{h_{ij}}\tr_j{h_{ij}}}= 8a+ 16b + 4c+ 8d\\
&& \tr_{i,j}\br{h_{ij}^2}= 4a+8b+8c+16d.
\enqarst
One can solve for $d$ to obtain 
$$d=\frac{\br{\tr_{i,j}{h_{ij}}}^2}{36}+\frac{\tr_{i,j}\br{h_{ij}^2}}{9}-\frac{\tr_j\br{\tr_i{h_{ij}}\tr_i{h_{ij}}}+\tr_i\br{\tr_j{h_{ij}}\tr_j{h_{ij}}}}{18}.$$
In order to obtain a simpler lower bound and see that $d$ is positive, we expand $h_{ij}= \sum_{x,y} f^{i,j}_{x,y}\sigma^i_x\otimes\sigma^j_y$ in the two qubit Pauli basis. Then 
$$\br{\tr_{i,j}h_{ij}}^2=16\br{f^{i,j}_{0,0}}^2, \quad \tr_{i,j}\br{h_{ij}^2}=4\sum_{x,y}\br{f^{i,j}_{x,y}}^2,$$ $$\tr_j\br{\tr_i{h_{ij}}\tr_i{h_{ij}}}=8\sum_{y}\br{f^{i,j}_{0,y}}^2, \quad \tr_i\br{\tr_j{h_{ij}}\tr_j{h_{ij}}}=8\sum_{y}\br{f^{i,j}_{y,0}}^2.$$
Hence,
\beqarst
&&\frac{\br{\tr_{i,j}{h_{ij}}}^2}{36}+\frac{\tr_{i,j}\br{h_{ij}^2}}{9}-\frac{\tr_j\br{\tr_i{h_{ij}}\tr_i{h_{ij}}}+\tr_i\br{\tr_j{h_{ij}}\tr_j{h_{ij}}}}{18}\\
&&=\frac{4}{9}\br{\br{f^{i,j}_{0,0}}^2+\sum_{x,y}\br{f^{i,j}_{x,y}}^2-\sum_{y}\br{f^{i,j}_{0,y}}^2-\sum_{y}\br{f^{i,j}_{y,0}}^2}\\
&&=\frac{4}{9}\br{\sum_{x>0,y>0}\br{f^{i,j}_{x,y}}^2}.
\enqarst
Conjugating Eq. \ref{eq:haarint} with $\bra{1100}\br{\cdot}\ket{0011}$, it can be seen that only the term corresponding to $d$ survives and evaluates to $1$. Thus, Eq. \ref{eq:haarval} gives
\beqarst
&&\expec_{v,r} \alpha_{v,r}\geq \\
&& \frac{4}{5}\expec_{\{i,j\}\in E}\br{\frac{\br{\tr_{i,j}{h_{ij}}}^2}{36}+\frac{\tr_{i,j}\br{h_{ij}^2}}{9}-\frac{\tr_j\br{\tr_i{h_{ij}}\tr_i{h_{ij}}}+\tr_i\br{\tr_j{h_{ij}}\tr_j{h_{ij}}}}{18}}\\
&& = \frac{16}{45}\expec_{\{i,j\}\in E}\br{\sum_{x>0,y>0}\br{f^{i,j}_{x,y}}^2}=\frac{16}{45}\frac{\mathrm{quad}(H)}{|E|}.
\enqarst
Thus, using the convexity of square function,
$$\expec_{v,r} \alpha^2_{v,r}\geq \frac{16^2}{45^2}\br{\frac{\mathrm{quad}(H)}{|E|}}^2\geq \frac{1}{8}\br{\frac{\mathrm{quad}(H)}{|E|}}^2.$$
This completes the proof by employing Theorem \ref{thm:general bound}.
\end{proof}

\subsection{Triangle-free graphs}

In this section we establish the second part of Theorem \ref{thm:random}, which concerns triangle-free graphs. The proof is based on the following exact expression. It will be convenient in what follows to work in a local basis in which the product state of interest is $|v\rangle=|0^n\rangle$.

\begin{lemma}[\textbf{Improvement for triangle-free Hamiltonians}]
Suppose $G$ is a triangle-free, degree-$d$ graph. Suppose we are given single-qubit Hermitian operators $\{P_i\}_{i\in [n]}$ satisfying $P_i^2=I$ and $\langle 0|P_i|0\rangle=0$ for all $i\in [n]$, and consider the state $|\psi\rangle=e^{i\sum_{\{r,s\}\in E} \theta_{rs} P_r P_s}|0^n\rangle$ as a function of the real parameters $\{\theta_{rs}\}$. Define
\[
\alpha_{kl}=|\langle 00|[h_{kl}, P_kP_l]|00\rangle|
\]
We can efficiently choose $\theta_{ij}\in \{\pm \theta\}$ for each edge $\{i,j\}\in E$ so that
\begin{align}
\langle \psi| h_{kl}|\psi\rangle&= \frac{1}{4}\mathrm{Tr}(h_{kl})+ \frac{1}{4}\mathrm{Tr}(h_{kl} Z_k Z_l) \cos^{2d-2}(2\theta)+\frac{1}{4}\mathrm{Tr}(h_{kl}(Z_k+Z_l)) \cos^{d}(2\theta).+\frac{\alpha_{kl}}{2} \sin(2\theta)\cos^{d-1}(2\theta)
\label{eq:psien1}
\end{align}
for all edges $\{k,l\}\in E$
\end{lemma}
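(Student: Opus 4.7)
The plan is to compute $\bra{\psi}h_{kl}\ket{\psi}=\bra{0^n}U^{\dagger}h_{kl}U\ket{0^n}$ by peeling off the circuit one layer at a time. First I would observe that all exponents $P_rP_s$ pairwise commute: when two edges $\{r,s\},\{r',s'\}$ share a vertex, the two $P$'s at that vertex square to $I$, so what remains acts on disjoint qubits and commutes; disjoint edges commute trivially. Hence $U=\prod_{rs}e^{i\theta_{rs}P_rP_s}$ can be reordered freely, and every gate on an edge not incident to $\{k,l\}$ cancels in $U^{\dagger}h_{kl}U$. The surviving piece factors as
\[
U_{\mathrm{rel}}= e^{i\theta_{kl}P_kP_l}\,e^{iP_kS_k}\,e^{iP_lS_l},\qquad S_k=\sum_{s\in N(k)\setminus\{l\}}\theta_{ks}P_s,
\]
with $S_l$ analogous. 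Triangle-freeness is the crucial structural input: it forces $N(k)\setminus\{l\}$ and $N(l)\setminus\{k\}$ to be disjoint, so the three exponentials share only qubits $k$ or $l$ and pairwise commute.

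Next I would work in the rotated single-qubit basis $\{I,P_k,Z_k,Y'_k\}$ with $Y'_k\defeq -iP_kZ_k$. The identities $P_k^2=I$ and $\{P_k,Z_k\}=0$ make this an orthogonal basis with the same algebra as $\{I,X_k,Y_k,Z_k\}$; crucially $\bra{0}\sigma\ket{0}=1$ for $\sigma\in\{I,Z\}$ and $0$ for $\sigma\in\{P,Y'\}$. Expanding $h_{kl}=\sum_{x,y}c_{xy}\sigma^k_x\sigma^l_y$ in this basis (tensored with its analog on $l$), I split $h_{kl}=h^{(c)}_{kl}+h^{(a)}_{kl}$ into pieces commuting/anticommuting with $P_kP_l$; a term commutes iff both single-qubit factors lie in $\{I,P\}$ or both in $\{Z,Y'\}$. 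Conjugation by $e^{i\theta_{kl}P_kP_l}$ fixes $h^{(c)}_{kl}$ and sends $h^{(a)}_{kl}$ to $\cos(2\theta_{kl})h^{(a)}_{kl}+i\sin(2\theta_{kl})P_kP_lh^{(a)}_{kl}$.

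For the neighbor conjugation I use that $e^{i\theta_{ks}P_kP_s}$ anticommutes with $Z_k$ and $Y'_k$. Iterating the identity $e^{-iA\theta}Be^{iA\theta}=\cos(2\theta)B+i\sin(2\theta)AB$ (for $A^2=I,\{A,B\}=0$) over $s\in N(k)\setminus\{l\}$ yields a sum over subsets $T\subseteq N(k)\setminus\{l\}$, with each $T$ contributing the prefactor $\prod_{s\in T}\sin(2\theta_{ks})P_s\prod_{s\notin T}\cos(2\theta_{ks})$ multiplying either $Z_k$ or $Y'_k$. With $\theta_{ks}=\pm\theta$ and $|N(k)\setminus\{l\}|=d-1$, sandwiching between $\ket{0^{N(k)\setminus\{l\}}}$ kills every $T\ne\emptyset$ via $\bra{0}P_s\ket{0}=0$, leaving $\cos^{d-1}(2\theta)Z_k$ from $Z_k$ and $\cos^{d-1}(2\theta)Y'_k$ from $Y'_k$. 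Thus $e^{iP_kS_k}$ effectively implements a single-qubit channel $\mathcal{E}_k$ on $k$ that fixes $I_k,P_k$ and scales $Z_k,Y'_k$ by $\cos^{d-1}(2\theta)$; analogously for $\mathcal{E}_l$.

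Assembling everything: $\bra{\psi}h_{kl}\ket{\psi}=\bra{00}(\mathcal{E}_k\!\otimes\!\mathcal{E}_l)\!\big(h^{(c)}_{kl}+\cos(2\theta_{kl})h^{(a)}_{kl}+i\sin(2\theta_{kl})P_kP_lh^{(a)}_{kl}\big)\ket{00}_{kl}$. Only Paulis with both single-qubit factors in $\{I,Z\}$ survive the outer $\bra{00}\cdot\ket{00}$. From $h^{(c)}_{kl}$ the $II$ and $ZZ$ components yield $\tfrac14\Tr(h_{kl})$ and $\tfrac14\Tr(h_{kl}Z_kZ_l)\cos^{2d-2}(2\theta)$; the $IZ,ZI$ components of $h^{(a)}_{kl}$ yield $\tfrac14\Tr(h_{kl}(Z_k+Z_l))\cos^d(2\theta)$ using $\cos(2\theta_{kl})=\cos(2\theta)$. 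Among the terms of $P_kP_lh^{(a)}_{kl}$, only $PY'\mapsto-iZ_l$ and $Y'P\mapsto-iZ_k$ land in the $\{I,Z\}^{\otimes 2}$ sector, producing $\sin(2\theta_{kl})(c_{PY'}+c_{Y'P})\cos^{d-1}(2\theta)$. A direct evaluation gives $\bra{00}[h_{kl},P_kP_l]\ket{00}=2i(c_{PY'}+c_{Y'P})$, so $\alpha_{kl}=2|c_{PY'}+c_{Y'P}|$; picking $\theta_{kl}=\theta\cdot\mathrm{sign}(c_{PY'}+c_{Y'P})$ turns this last contribution into $\tfrac{\alpha_{kl}}{2}\sin(2\theta)\cos^{d-1}(2\theta)$, as claimed. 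Sign choices on different edges are independent because all other factors depend only on $\cos(\cdot)$. The main obstacle I anticipate is the careful bookkeeping in this last step: tracking which of the $16$ Paulis on $\{k,l\}$ contribute through each of the three pieces of $h'_{kl}$, and in particular recognizing that the ``anomalous'' $IZ,ZI$ contribution from $P_kP_lh^{(a)}_{kl}$ is precisely the commutator expression defining $\alpha_{kl}$.
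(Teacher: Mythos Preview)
Your argument is correct and arrives at the same formula by the same underlying mechanism as the paper: only gates on edges touching $\{k,l\}$ matter, triangle-freeness separates the $k$- and $l$-neighborhoods, the condition $\bra{0}P_s\ket{0}=0$ kills all cross terms coming from neighbors, and the sign of $\theta_{kl}$ is fixed so that the residual first-order piece equals $\tfrac{\alpha_{kl}}{2}\sin(2\theta)\cos^{d-1}(2\theta)$. One minor slip: for $A^2=I$ and $\{A,B\}=0$ one has $e^{-iA\theta}Be^{iA\theta}=\cos(2\theta)B-i\sin(2\theta)AB$, not $+i\sin(2\theta)AB$; this is harmless since it is absorbed in the sign choice of $\theta_{kl}$.

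Where you differ from the paper is in bookkeeping, and the difference is instructive. The paper expands $V_{kl}=\prod_{s}(\cos\theta+i\sin\theta_{ks}P_kP_s)\prod_{r}(\cos\theta+i\sin\theta_{rl}P_rP_l)$ as a subset sum, observes that after taking $\bra{0^n}\cdot\ket{0^n}$ only the \emph{parities} of $|A|,|B|$ matter, and collects the four parity classes into $F_{EE}+F_{EO}+F_{OE}+F_{OO}$ expressed via $a_{xy}=\bra{xy}e^{-i\theta_{kl}P_kP_l}h_{kl}e^{i\theta_{kl}P_kP_l}\ket{xy}$; a second step relates $a_{xy}$ to $b_{xy}=\bra{xy}h_{kl}\ket{xy}$ and then regroups the $b_{xy}$ into the traces $\Tr(h_{kl})$, $\Tr(h_{kl}Z_kZ_l)$, $\Tr(h_{kl}(Z_k+Z_l))$. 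Your rotated Pauli basis $\{I,P,Z,Y'\}$ with $Y'=-iPZ$ and the ``channel'' viewpoint on the neighbor gates ($I,P$ fixed, $Z,Y'$ scaled by $\cos^{d-1}(2\theta)$) is exactly the diagonalization of that parity-sum: the even/odd parity of $|A|$ is the $I$-vs-$P_k$ power, and the four parity classes are precisely the four $\{I,Z\}^{\otimes 2}$ diagonal Pauli components you keep. Your packaging avoids the intermediate $a_{xy}\to b_{xy}$ regrouping and reads off the trace coefficients directly; the paper's version is perhaps more elementary in that it never needs to verify $\{P_k,Z_k\}=0$ or introduce $Y'_k$.
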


\begin{proof}
We have
\begin{equation}
\langle \psi| h_{kl}|\psi\rangle = \langle 0^n|V^{\dagger}_{kl} h_{kl}(\theta) V_{kl}|0^n\rangle
\label{eq:psie}
\end{equation}
where $h_{kl}(\theta_{kl})=e^{-i\theta_{kl} P_k P_l} h_{kl}e^{i\theta_{kl} P_k P_l}$ and
\begin{align}
V_{kl}& =\prod_{\{k,s\}\in E\setminus \{k,l\}} e^{i\theta_{ks} P_k P_s} \prod_{\{r,l\}\in E\setminus \{k,l\}} e^{i\theta_{rl} P_r P_l}\\
&=\prod_{\{k,s\}\in E\setminus \{k,l\}} \left(\cos(\theta)+i\sin(\theta_{ks})P_kP_s\right)\prod_{\{r,l\}\in E\setminus \{k,l\}}  \left(\cos(\theta)+i\sin(\theta_{rl})P_rP_l\right).
\label{eq:vkl}
\end{align}
Plugging Eq.~\eqref{eq:vkl} into Eq.~\eqref{eq:psie} and using $\langle 0|P_i|0\rangle=0$ and the fact that $G$ is triangle-free gives
\begin{align*}
\langle \psi| h_{kl}|\psi\rangle=\sum_{A\subseteq N(k)\setminus{\{l\}}} \sum_{B\subseteq N(l)\setminus{\{k\}}}& \left(\cos^2(\theta)\right)^{2d-2-|A|-|B|} \left(\sin^2(\theta)\right)^{|A|+|B|} \\
& \cdot \langle 0^n| \left(\prod_{s\in A} P_k P_s \prod_{r\in B} P_r P_l\right) h_{kl}(\theta_{kl}) \left(\prod_{s\in A} P_k P_s \prod_{r\in B} P_r P_l\right)|0^n\rangle
\end{align*}
In the above we also used our choice $|\theta_{ij}|=\theta$ for all edges $\{i,j\}\in E$. Observe that the matrix element appearing in the above depends only on the parity (even/odd) of $|A|$ and $|B|$. In particular,
\[
\langle \psi| h_{kl}|\psi\rangle=F_{EE}+F_{EO}+F_{OE}+F_{OO} 
\]
where the even/even term is
\begin{align}
F_{EE}&=\langle 00|h_{kl}(\theta_{kl})|00\rangle\left(\sum_{j=0,2,\ldots} {{d-1}\choose {j} }\left(\cos^2(\theta)\right)^{d-1-j} \left(\sin^2(\theta)\right)^{j}\right)^2\\
&=\langle 00|h_{kl}(\theta_{kl})|00\rangle\frac{1}{4}\left(1+\cos^{d-1}(2\theta)\right)^2,
\label{eq:fee}
\end{align}
and by similar calculations one arrives at
\begin{align}
F_{EO}& =\langle 10| h_{kl}(\theta_{kl})|10\rangle  \frac{1}{4}\left(1-\cos^{2d-2}(2\theta)\right)\label{eq:feo}\\
F_{OE}& =\langle 01| h_{kl}(\theta_{kl})|01\rangle  \frac{1}{4}\left(1-\cos^{2d-2}(2\theta)\right) \label{eq:foe}\\
F_{OO} &=\langle 11|h_{kl}(\theta_{kl})|11\rangle\frac{1}{4}\left(1-\cos^{d-1}(2\theta)\right)^2 
\label{eq:foo}
\end{align}

Now for ease of presentation in the following we write $c=\cos^{d-1}(2\theta)$ and $a_{xy}=\langle x y|h_{kl}(\theta_{kl})|x y\rangle$, for $x,y\in \{0,1\}$. Then expanding the above expression gives
\begin{align}
\langle \psi| h_{kl}|\psi\rangle &=F_{EE}+F_{EO}+F_{OE}+F_{OO} \label{eq:feq}\\
&= a_{00}+(1-c)\left( \frac{a_{01}+a_{10}}{2}-a_{00}\right)+ \frac{1}{4}(1-c)^2 \left( a_{11} +a_{00}-a_{01}-a_{10}\right).\label{eq:ca}
\end{align}

Now let 
\begin{align}
    b_{xy}=\langle xy|h_{kl}|xy\rangle.
\end{align}
So that
\begin{align*}
a_{00}=\cos^2 (\theta)b_{00}+\sin^2(\theta) b_{11}+i\cos(\theta) \sin(\theta_{kl})\langle 00|[h_{kl}, P_kP_l]|00\rangle.
\end{align*}
We now fix the sign of $\theta_{kl}$ so that 
\begin{equation}
a_{00}=\cos^2 (\theta)b_{00}+\sin^2(\theta) b_{11}+\frac{1}{2}\sin(2\theta)\alpha_{kl}.
\label{eq:00}
\end{equation}
With this choice we have
\begin{equation}
a_{11}=\cos^2 (\theta)b_{11}+\sin^2(\theta) b_{00}-\frac{1}{2}\sin(2\theta)\alpha_{kl}.
\label{eq:11}
\end{equation}
To compute the third term in the above equation we used the fact that $P_k^2=P_l^2=I$ and $\langle0|P_k|0\rangle=\langle 0|P_l|0\rangle=0$ which implies
\[
\langle 11|[P_k P_l, h_{kl}]|11\rangle =\langle 00|P_k P_l[P_k P_l, h_{kl}]P_k P_l|00\rangle=-\langle 00|[P_k P_l, h_{kl}]|00\rangle.
\]
Similarly, by a direct calculation we get
\begin{equation}
a_{01}+a_{10}=b_{10}+b_{01}.
\label{eq:10}
\end{equation}

Plugging Eqs.~(\ref{eq:00}, \ref{eq:11}, \ref{eq:10}) into Eq.~\eqref{eq:ca} we get
\begin{align}
\langle \psi| h_{kl}|\psi\rangle &= b_{00} + (a_{00}-b_{00})c +(1-c)\left(\frac{b_{01}+b_{10}}{2}-b_{00}\right)+\frac{1}{4}(1-c)^2\left(b_{00}+b_{11}-b_{10}-b_{01}\right)\nonumber\\
&=b_{00}+\frac{\alpha_{kl}}{2}\sin(2\theta)\cos^{d-1}(2\theta)+\sin^2(\theta)\cos^{d-1}(2\theta)\left(b_{11}-b_{00}\right)\nonumber\\
&+(1-\cos^{d-1}(2\theta)) \left(\frac{b_{01}+b_{10}}{2}-b_{00}\right)+\frac{1}{4}(1-\cos^{d-1}(2\theta))^2\left(b_{00}+b_{11}-b_{10}-b_{01}\right)
\end{align}
Rearranging the above expression we arrive at

\begin{align}
\langle \psi| h_{kl}|\psi\rangle&= \frac{\alpha_{kl}}{2}\sin(2\theta)\cos^{d-1}(2\theta)+b_{00}\left(\frac{1}{4}+\frac{1}{4}\cos^{2d-2}(2\theta)+\frac{1}{2}\cos^{d}(2\theta)\right)\nonumber\\
&+b_{11}\left(\frac{1}{4}+\frac{1}{4}\cos^{2d-2}(2\theta)-\frac{1}{2}\cos^{d}(2\theta)\right)+\left(b_{01}+b_{10}\right)\left(\frac{1}{4}-\frac{1}{4}\cos^{2d-2}(2\theta)\right)
\label{eq:trifree}
\end{align}

By noting that $\sum_{x,y} b_{xy} = \mathrm{Tr}(h_{kl})$, $\sum_{x,y} (-1)^{x+y} b_{xy} = \mathrm{Tr}(h_{kl}Z_kZ_l)$, and $b_{00}-b_{11} = \dfrac{1}{2}\mathrm{Tr}(h_{kl}(Z_k+Z_l))$, we arrive at Eq.~\eqref{eq:psien1}.

\end{proof}

Using the expression in \eqref{eq:psien1}, we prove the bound for triangle-free graphs from Theorem \ref{thm:random}:

\begin{proof}
As shown above, the exact formula for the energy of $|\psi\rangle=V(\vec{\theta})|0^n\rangle$ on a triangle-free graph is
\begin{align}
\langle \psi| h_{kl}|\psi\rangle&= \frac{1}{4}\mathrm{Tr}(h_{kl})+ \frac{1}{4}\mathrm{Tr}(h_{kl} Z_k Z_l) \cos^{2d-2}(2\theta)\nonumber\\
&+\frac{1}{4}\mathrm{Tr}(h_{kl}(Z_k+Z_l)) \cos^{d}(2\theta).+\frac{\alpha_{kl}}{2} \sin(2\theta)\cos^{d-1}(2\theta).
\label{eq:psien}
\end{align}
Here $\alpha_{kl}$ depends on the choices of $P_k, P_l$. We either choose $P_i=X_i$ for all $i$, or $P_i=(X+Y)_i/\sqrt{2}$ for all $i$, each with probability $1/2$. Then
\[
\mathbb{E}(\alpha_{kl})=2|\mathrm{Re}(\langle 00|h_{kl}|11\rangle)|+2|\mathrm{Im}(\langle 00|h_{kl}|11\rangle)|\geq 2|\langle 00|h_{kl}|11\rangle|.
\]
Substituting in Eq.~\eqref{eq:psien} gives
\begin{align}
\mathbb{E}\left(\langle \psi| h_{kl}|\psi\rangle\right)&\geq \frac{1}{4}\mathrm{Tr}(h_{kl})+ \frac{1}{4}\mathrm{Tr}(h_{kl} Z_k Z_l) \cos^{2d-2}(2\theta)\nonumber\\
&+\frac{1}{4}\mathrm{Tr}(h_{kl}(Z_k+Z_l)) \cos^{d}(2\theta).+|\langle 00|h_{kl}|11\rangle| \sin(2\theta)\cos^{d-1}(2\theta).
\label{eq:psien2}
\end{align}
Now instead of using the starting state $|0^n\rangle$, suppose we start from a random computational basis state $|s\rangle=X(s)|0^n\rangle$. Running through the above argument in the rotated basis defined by $|s\rangle$ we see that for a suitable random choice of $\{P_i\}$ we have
\begin{align}
\mathbb{E}\left(\langle \psi| h_{kl}|\psi\rangle\right)&\geq \frac{1}{4}\left(\mathrm{Tr}(h_{kl})\right)+\frac{|\langle 00|h_{kl}|11\rangle|+|\langle 01|h_{kl}|10\rangle|}{2} \sin(2\theta)\cos^{d-1}(2\theta)\nonumber\\
&\geq \frac{1}{4}\left(\mathrm{Tr}(h_{kl})\right)+\frac{1}{4}|\mathrm{Tr}(h_{kl} X_kX_l)|\sin(2\theta)\cos^{d-1}(2\theta).
\label{eq:xx}
\end{align}
Here we used the fact that 
\[
\mathbb{E}_s\left(\mathrm{Tr}(X(s)h_{kl}X(s) Z_k Z_l)\right)=\mathbb{E}_s\left(\mathrm{Tr}(X(s)h_{kl}X(s)(Z_k+Z_l)) \right)=0.
\]
Summing Eq.~\eqref{eq:xx} over all edges $\{k,l\}\in E$ gives
\[
\mathbb{E}\left(\langle \psi| H|\psi\rangle\right)\geq \frac{1}{4}\left(\mathrm{Tr}(H)\right)+\frac{1}{4}\sin(2\theta)\cos^{d-1}(2\theta) \sum_{\{k,l\}\in E} |\mathrm{Tr}\left(h_{kl} X_k X_l\right)|
\]
Since there is nothing special about the $X$-basis we can again use our freedom to randomize the local basis of each qubit to get
\begin{align}
\mathbb{E}\left(\langle \psi| H|\psi\rangle\right)&\geq \frac{1}{4}\mathrm{Tr}(H)+\frac{1}{36}\sin(2\theta)\cos^{d-1}(2\theta) \sum_{\{k,l\}\in E}\;  \sum_{Q,R\in \{X,Y,Z\}} |\mathrm{Tr}\left(h_{kl} Q_k R_l\right)|\nonumber\\
&\geq  \frac{1}{4}\mathrm{Tr}(H)+\frac{1}{36}\sin(2\theta)\cos^{d-1}(2\theta) \sum_{\{k,l\}\in E}\;  \sum_{Q,R\in \{X,Y,Z\}} |\mathrm{Tr}\left(h_{kl} Q_k R_l\right)|^2/4\nonumber\\
&=\frac{1}{4}\mathrm{Tr}(H)+\sin(2\theta)\cos^{d-1}(2\theta) \frac{\mathrm{quad}(H)}{36},
\end{align}
where in the second-to-last line we used the fact that $|\mathrm{Tr}\left(h_{kl} Q_k R_l\right)|\leq 4$ which follows from $\|h_{kl}\|\leq 1$. Finally, we can find the maximum value of the second term with respect to $\theta$ by noting that $\sin(2\theta)\cos^{d-1}(2\theta)$ reaches a maximum when $\theta=\arcsin(\frac{1}{\sqrt{d}})$. Using this fact we get

\begin{equation}
    \mathbb{E}(\langle \psi |H|\psi\rangle)\geq \frac{1}{4}\mathrm{Tr}(H)+\Omega\left(\frac{\mathrm{quad}(H)}{\sqrt{d}}\right)\label{eq:rr35}.
\end{equation}
\end{proof}

\section{Improvement of bounded-depth states}
We prove Theorem \ref{thm:lowdepthimp}. Given the $d$-regular graph $G=(V,E)$, we consider the state $\ket{v}=W\ket{0}^n$, where $W$ has a maximum lightcone of size $\ell$. The aim is to increase the energy of $\ket{v}$ with respect to $H$.  The light cones of the edges have sizes at most $2\ell$. Define 
\beq
\label{eq:Gdef}
F= \sum_{j=1}^n W\ketbra{1}{1}_j W^{\dagger}.
\enq
 The locality of $F$ is $\ell$. Let $A= i[H, F]$ and define $\ket{\psi}= e^{iA\theta}\ket{v}$ (thus $U=e^{iA\theta}$ in the statement of Theorem \ref{thm:lowdepthimp}). We can write 
\beq
\label{eq:Wdef}
A=\sum_{e\in E} i[h_e,F]:=\sum_{e\in E} A_e,
\enq
where 
\beqar
A_e= i\sum_{j=1}^n[h_e, W\ketbra{1}{1}_j W^{\dagger}]=i\sum_{j: \text{supp}(h_e) \cap \text{supp}(W\ketbra{1}{1}_j W^{\dagger})\neq \phi}[h_e, W\ketbra{1}{1}_j W^{\dagger}]. 
\enqar
 Any $j$ satisfying $\text{supp}(h_e) \cap \text{supp}(W\ketbra{1}{1}_j W^{\dagger})\neq \phi$ is in the light cone of $h_e$. Thus there are $\leq 2\ell$ such $j$'s. For any such $j$, $W\ketbra{1}{1}_j W^{\dagger}$ has locality $\ell$. Thus, $A_e$ is supported on $\leq 2\ell^2$ qubits. Further, $$\|A_e\|\leq 2\ell\cdot \max_{j: \text{supp}(h_e) \subset \text{supp}(W\ketbra{1}{1}_j W^{\dagger})}\|[h_e, W\ketbra{1}{1}_j W^{\dagger}]\| \leq 2\ell,$$ where we used $$\|[h_e, W\ketbra{1}{1}_j W^{\dagger}]\|=\|[h_e, W\ketbra{1}{1}_j W^{\dagger}-\id/2]\|\leq 2\|h_e\|\|W\ketbra{1}{1}_j W^{\dagger}-I/2\|\leq 1.$$ We have
\beq
\label{eq:energy}
\bra{\psi}H\ket{\psi}=\bra{v}H\ket{v}- i\bra{v}[A,H]\ket{v}\theta+\sum_{m=2}^\infty \frac{(-i\theta)^m}{m!}\bra{v}[A,H]_m\ket{v},
\enq
 Now, using the identities $F\ket{v}=0$ and $F \geq \mathrm{I}-\ketbra{v}{v}$, we find
\beq
\label{eq:firstorder}
-i\bra{v}[A,H]\ket{v}\theta=\bra{v}[[H, F], H]\ket{v}\theta = 2\bra{v}HFH\ket{v}\theta\geq 2\theta\bra{v}H(\mathrm{I}-\ketbra{v}{v})H\ket{v}= 2\theta \mathrm{Var}(H).
\enq
Thus, let us focus on the terms with $m\geq 2$. We upper bound 
\beq
\label{eq:mupbound}
\bra{v}[A,H]_m\ket{v}\leq \sum_{e\in E}\|[A,h_e]_m\|\leq |E|\max_e\|[A, h_e]_m\|.
\enq
Now, consider for each $e$,
\beq
\label{eq:commexpand}
[A,h_e]_m= \sum_{e_1,\ldots e_m}[A_{e_m},[A_{e_{m-1}}\ldots[A_{e_1},h_e]]],
\enq
where we used Eq. \ref{eq:Wdef}. Most terms are zero and we will bound the number of non-zero terms. We will use the following simple fact.
\begin{fact}
\label{fact:subset}
Let $S\subset V$. The number of $e$ such that the support of $A_e$ overlaps with $S$ is at most $|S|\ell^2 d$. For each such $e$ and any operator $O_S$ on $S$ , the support of $[O_S, A_e]$ is at most $|S|+2\ell^2$.
\end{fact}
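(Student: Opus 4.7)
The plan is to prove each of the two assertions of the Fact separately, relying only on the lightcone structure of the circuit $W$ and the definition of $A_e$ unpacked in the paragraph just above the Fact.

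For the first assertion, I would count, for each fixed $q\in S$, how many edges $e$ can satisfy $q\in\mathrm{supp}(A_e)$, and then sum over $q\in S$. Recall that $A_e=\sum_{j\in B_e} i[h_e,W\ketbra{1}{1}_j W^\dagger]$, where $B_e=\{j:\mathrm{supp}(h_e)\cap\mathcal{L}(j)\neq\emptyset\}$. Hence $\mathrm{supp}(A_e)\subseteq\bigcup_{j\in B_e}\mathcal{L}(j)$, so if $q\in\mathrm{supp}(A_e)$ then there is some $j$ with $q\in\mathcal{L}(j)$ and $\mathrm{supp}(h_e)\cap\mathcal{L}(j)\neq\emptyset$. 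The number of $j$ with $q\in\mathcal{L}(j)$ is at most $\ell$ (by the usual reverse-lightcone bound: each output qubit is in the forward lightcones of at most $\ell$ input qubits, since interchanging roles of input/output in a depth-$D$ circuit gives the same bound). For each such $j$, the set $\mathcal{L}(j)$ has size $\le \ell$, and each vertex of $\mathcal{L}(j)$ is incident to at most $d$ edges of $G$; so the number of edges $e$ with $\mathrm{supp}(h_e)\cap\mathcal{L}(j)\neq\emptyset$ is at most $\ell d$. Multiplying gives at most $\ell\cdot\ell d=\ell^2 d$ edges per $q$, and summing over $q\in S$ gives the claimed bound $|S|\ell^2 d$.

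For the second assertion, I would just use $\mathrm{supp}([O_S,A_e])\subseteq\mathrm{supp}(O_S)\cup\mathrm{supp}(A_e)\subseteq S\cup\mathrm{supp}(A_e)$, together with the bound $|\mathrm{supp}(A_e)|\le 2\ell^{2}$ already recorded in the paragraph preceding the Fact (which comes from $|B_e|\le 2\ell$ and $|\mathcal{L}(j)|\le\ell$). This yields $|\mathrm{supp}([O_S,A_e])|\le |S|+2\ell^2$.

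There is no serious obstacle here: the statement is purely combinatorial bookkeeping about lightcones and edge-vertex incidences, and the only care needed is the double-counting argument that converts "$q\in\mathcal{L}(j)$" for each $q$ into a uniform bound of $\ell$ on the number of such $j$. As long as one is comfortable invoking the reverse-lightcone bound for $W$, the rest is a one-line multiplication and a one-line union bound.
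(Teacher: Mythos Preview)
Your proposal is correct and follows essentially the same route as the paper: both arguments trace $q\in S$ back through the (reverse) lightcone to the admissible $j$'s, then forward through $\mathcal{L}(j)$ to count the edges $e$ meeting it, arriving at the same $|S|\cdot\ell\cdot\ell d$ bound; the paper just packages this as ``the light cone of the light cone of $S$'' rather than summing per vertex. The second assertion is handled identically in both, via the union bound $\mathrm{supp}([O_S,A_e])\subseteq S\cup\mathrm{supp}(A_e)$ together with $|\mathrm{supp}(A_e)|\le 2\ell^2$.
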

\begin{proof}
Since $e$ is such that the support of $A_e$ overlaps with $S$, there exist a $j$ satisfying $\text{supp}(h_e) \cap \text{supp}(W\ketbra{1}{1}_j W^{\dagger})\neq \phi$ for which the support of $[h_e,W\ketbra{1}{1}_j W^{\dagger}]$ overlaps with $S$. Thus, either $\text{supp}(h_e)\cap S \neq \phi$ or $j$ belongs to the light cone of $S$. Since the support of $h_e$ overlaps with the light cone of $j$ in the latter case, we have that the support of $h_e$ overlaps with the light cone of the light cone of $S$ (in both the cases). To upper bound the number of possible $e$, we hence we count the size of the light cone of the light cone of $S$ ($\leq|S|\ell^2$)  and then count the number of edges intersecting with this light cone ($\leq d|S|\ell^2$). For any such $e$, the support of $[O_S, A_e]$ is contained in the union of $S$ and the support of $A_e$. This completes the proof. 
\end{proof}
Using Fact \ref{fact:subset}, let us estimate the number of $(e_1,e_2,\ldots e_m)$ that contribute to Eq. \ref{eq:commexpand}. Setting $S$ to be the set of two vertices of $e$, we find that the number of $e_1$ is at most $2d\ell^2$. Arguing inductively, suppose we have fixed $e_1,e_2,\ldots e_{k-1}$. The support size of $[A_{e_{k-1}},[A_{e_{m-1}}\ldots[A_{e_1},h_e]]]$ is at most $2+2(k-1)\ell^2$ (by Fact \ref{fact:subset}). Thus, the number of $e_k$ contributing to Eq. \ref{eq:commexpand} is at most $(2+2(k-1)\ell^2)\ell^2 d$. Hence, the total number of $(e_1,\ldots e_m)$ is at most
$$(2d\ell^2)\cdot (2+2\ell^2)\ell^2 d \cdot (2+4\ell^2)\ell^2 d \ldots (2+2(m-1)\ell^2)\ell^2 d \leq 2^m\cdot (m-1)!\cdot \ell^{2m-2}\cdot (2d\ell^2)^m \leq (m-1)! (4d\ell^4)^m.$$
Thus,
\beqar
\|[A,h_e]_m\|_\infty&\leq& (m-1)! (4d\ell^4)^m \max_{e_1,\ldots e_m}\|[A_{e_m},[A_{e_{m-1}}\ldots[A_{e_1},h_e]]]\|\nonumber\\
&\leq& (m-1)! (4d\ell^4)^m \cdot 2^m\|h_e\|\cdot\max_{e_1,\ldots e_m}\|A_{e_1}\|\cdot \|A_{e_2}\|\ldots \cdot \|A_{e_m}\|\nonumber\\
&\overset{(a)}\leq& (m-1)! (4d\ell^4)^m\cdot 2^m \cdot (2\ell)^m = (m-1)!\cdot (16d\ell^5)^m,
\enqar  
where $(a)$ uses $\|A_e\|\leq 2\ell$. Combining with Eq. \ref{eq:mupbound}, this ensures that
\beqar
\sum_{m=2}^\infty \left|\frac{(i\theta)^m}{m!}\bra{v}[[H,A]]_m\ket{v}\right|&\leq& |E|\sum_{m=2}^\infty \frac{\theta^m}{m!} (m-1)!\cdot (16d\ell^5)^m\nonumber\\
&\leq& |E|\cdot \sum_{m=2}^\infty (16d\ell^5\theta)^m\nonumber\\
&\leq& 2|E|\cdot (16d\ell^5\theta)^2, 
\enqar
where the last inequality assumes $\theta \leq\frac{1}{32d\ell^5}$ (our choice below will satisfy this). Thus, using Eq. \ref{eq:energy} and Eq. \ref{eq:firstorder},
\beqar
\bra{\theta}H\ket{\theta}&\geq&\bra{v}H\ket{v} + 2\theta \mathrm{Var}(H) - 2|E|\cdot (16d\ell^5\theta)^2\nonumber\\
&=& \bra{v}H\ket{v} + 2\theta |E|\br{\frac{\mathrm{Var}(H)}{|E|} - 2(16d\ell^5)^2\theta}.
\enqar
Setting $\theta= \frac{\mathrm{Var}(H)}{2^{10}d^2\ell^{10}|E|} \leq \frac{1}{32d\ell^5}$, we conclude that
\beq
\bra{\psi}H\ket{\psi}\geq \bra{v}H\ket{v} + \frac{\mathrm{Var}(H)^2}{2^{10}d^2\ell^{10}|E|}.
\enq

We highlight that the above proof can be applied with minor modifications to the more general case in which $F$ is a $\ell$-local Hamiltonian with the unique ground state $\ket{v}$ and constant spectral gap $\gamma=\Omega(1)$. In this case, we set the ground energy of $F$ at $0$, leading to the relations  $F\ket{v}=0$ and $F\succeq \gamma \left(\id - \ketbra{v}{v}\right)$. Thus, the first order contribution in \eqref{eq:firstorder} is replaced by $$-i\bra{v}[A,H]\ket{v}\theta\geq 2\gamma\theta \mathrm{Var}(H).$$ The higher order contributions are upper bounded in a manner similar to above.

\section{Improvement for general $k$-local Hamiltonians}

Let $G=(V,E)$ be a hypergraph with hyperedges of size at most $k$ and $n=|V|$ qubits on its vertices. We denote the number of hyperedges that contain $i\in V$ by $\deg(i)$ and assume $\deg(i)\leq d$ for all $i \in V$. Consider a $k$-local Hamiltonian $H=\sum_{R\in E} h_R$ where each local term $h_R$ acts non-trivially on a subset $R\in E$ of qubits with $|R|\leq k$ and $\norm{h_R}\leq 1$. Here without loss of generality we assume the input product states is $\ket{v}=\ket{0^n}$. We use a similar argument as in the proof of Theorem~\ref{thm:prodvar} to relate the improvement after applying an extension of the quantum circuit $V(\theta)$ in Eq.~\eqref{eq:rr2} to the variance $\Var(H)$. To this end, we write
\[
\Var(H)=\sum_{t=1}^{k} \bra{0^n} H Q_t H\ket{0^n}
\]
where $Q_t$ are the projector onto the computational basis states with Hamming weight $t$. Note that operators $Q_t$ with Hamming weight $>k$ do not contribute. It holds that there exists a $t$ such that 
\[\langle 0^n| H Q_t H|0^n\rangle \geq \frac{1}{k} \text{Var}(H).\]

Depending on $t$, our choice of circuit $V(\theta)$ is a generalization of what we had before in Theorem~\ref{thm:prodvar}. The set $S$ contains all the collection of $t$ different vertices $\{j_1,j_2,\dots,j_t\}$ which fully reside in the support of at least one local term $h_R$ of the Hamiltonian $H$. That is, there exists an $R$ such that $\{j_1,j_2,\dots,j_t\}\subseteq \supp(h_R)$.  Define $V(\theta)=e^{-i\theta L}$ where
$$L=\sum_{\{j_1,j_2,\dots,j_t\}\in S}(-1)^{a_{j_1,\dots,j_t}} P_{j_1}P_{j_2}\dots P_{j_t}.$$
Here each $P_j$ is a single-qubit Pauli operator with the property $\bra{0}P_j\ket{0}=0$ that acts nontrivially only on qubit $j$ and $a_{j_1,\dots,j_t}\in \{0,1\}$ is chosen so that 
\[
i\langle 0^n|(-1)^{a_{j_1,\dots,j_t}} [P_{j_1}P_{j_2}\dots P_{j_t},H]|0^n\rangle = |\langle 0^n| [P_{j_1}P_{j_2}\dots P_{j_t},H]|0^n\rangle|.
\] 
Let $|\theta\rangle=e^{-i\theta L}|0^n\rangle$. Then
\[
\langle \theta|H|\theta\rangle=\langle 0^n|H|0^n\rangle+\theta \sum_{\{j_1,\dots,j_t\}\in S}|\langle 0^n|[P_{j_1}P_{i_2}\dots P_{j_t}, H]|0^n\rangle| +\mathrm{Err},
\]
where the higher order terms $\mathrm{Err}$ can be bounded as
\begin{align}
|\mathrm{Err}| &=\left|\sum_{m\geq 2} \frac{i^m \theta^m}{m!} \langle 0^n|[L,H]_m|0^n\rangle\right|\nn\\
&\leq |E|\sum_{m\geq 2} \frac{\theta^m }{m!}\left(2kd \binom{k}{t-1}\right)^m\nn\\
&\leq \left(2kd\binom{k}{t-1}\right)^2\theta^2|E| e^{2kd\binom{k}{t-1}\theta}.\label{eq:rr31}
\end{align}
In the second line, we used the fact that 
$[L,H]_m=\sum_R [L,h_R]_m$ and each term $[L,h_R]$ can be expanded as a sum of at most $\left(kd\binom{k}{t-1}\right)^m$ non-zero terms each of norm at most $2^m$. This is because the operators $P_{j_1},P_{j_2}, \dots , P_{j_t}$ commute with each other for different choices of $\{j_1,j_2,\dots,j_t\}$ and only those that overlap with the support of $h_R$ may contribute. The number of such operators (i.e. $|S \cap \supp(h_R)|$) can be bounded by $kd\binom{k}{t-1}$ as follows: There are at most $k$ vertices in $\supp(h_R)$ and each vertex is in the support of $\leq d$ other terms in the Hamiltonian. Given a vertex $j\in \supp(h_R)$ and an overlapping Hamiltonian term $h_{R'}$ such that $j\in \supp(h_{R'})$, there are $\binom{k}{t-1}$ choices of vertices $\{j_1,j_2,\dots,j_t\}\subseteq \supp(h_{R'})$ that contain $j$. Hence, from the definition of set $S$ follows that $|S \cap \supp(h_R)|\leq kd\binom{k}{t-1}$ (one can obtain tighter bounds using $\bra{0}P_j\ket{0}=0$). Now define
\begin{align}
\beta=\frac{1}{|E|} \sum_{\{j_1,\dots,j_t\}\in S}|\langle 0^n|[P_{j_1} P_{j_2}\dots P_{j_t}, H]|0^n\rangle|.
\label{eq:rr34}
\end{align}
It holds that $\beta\leq  2 \binom{k}{t}$. To see this, note that $|\langle 0^n|[P_{j_1} P_{j_2}\dots P_{j_t}, H]|0^n\rangle|\leq \sum_R |\langle 0^n|[P_{j_1} P_{j_2}\dots P_{j_t}, h_R]|0^n\rangle|$. Using the assumption $\bra{0}P_j\ket{0}=0$, it follows that the only choices of vertices $\{j_1,\dots,j_t\}$ that may contribute in $|\langle 0^n|[P_{j_1} P_{j_2}\dots P_{j_t}, h_R]|0^n\rangle|$ are those which are fully contained in $\supp(h_R)$. The number of such vertices is bounded by $\binom{k}{t}$. Using $|\langle 0^n|[P_{j_1} P_{j_2}\dots P_{j_t}, h_R]|0^n\rangle|\leq 2$, we arrive at the claimed bound $\beta\leq  2 \binom{k}{t}$. We have
\[
\langle \theta|H|\theta\rangle=\langle 0^n|H|0^n\rangle+|E|\left(\theta\beta- \left(2kd\binom{k}{t-1}\right)^2\theta^2e^{2kd\binom{k}{t-1}\theta}\right).
\]
Choosing $\theta=O\left(\frac{\beta}{ k^2d^2\binom{k}{t-1}^2}\right)$ gives
\begin{align}
\langle \theta|H|\theta\rangle \geq 0^n|H|0^n\rangle+\Omega\left(\frac{|E|\beta^2}{k^2d^2\binom{k}{t-1}^2}\right).\label{eq:betabound}
\end{align}
Now let $\beta_1$ be given by Eq.~\eqref{eq:rr34} with $P_{j_1}P_{j_2}\dots P_{j_t}=X_{j_1}\otimes X_{j_2}\otimes \dots \otimes X_{j_t}$ for all $\{j_1,\dots,j_t\}\in S$. Define $|\hat{e}_{j_1,\dots,j_t}\rangle=X_{j_1}\otimes X_{j_2}\otimes \dots \otimes X_{j_t}|0^n\rangle$ and the operator $$p=\left(\begin{matrix}
  0 & e^{-i\frac{\pi}{2t}}\\
  e^{i\frac{\pi}{2t}} & 0
\end{matrix}\right).$$ Let $\beta_2$ be given by Eq.~\eqref{eq:rr34} with $P_{j_1}P_{j_2}\dots P_{j_t}=p_{j_1}\otimes p_{j_2}\otimes \dots \otimes p_{j_t}$ for all $\{j_1,\dots,j_t\}\in S$. Then, one can see that
\begin{align*}
\frac{\beta_1+\beta_2}{2}& =\frac{1}{|E|}\sum_{\{j_1,\dots,j_t\}\in S} \left| \mathrm{Im}(\langle \hat{e}_{j_1,\dots,j_t}|H|0^n\rangle)\right|+  \left| \mathrm{Re}(\langle \hat{e}_{j_1,\dots,j_t}|H|0^n\rangle)\right|\\ 
&\geq\frac{1}{|E|}\sum_{\{j_1,\dots,j_t\}\in S} d\cdot\left(\frac{\left| \mathrm{Im}(\langle \hat{e}_{j_1,\dots,j_t}|H|0^n\rangle)\right|^2}{d^2}+  \frac{\left| \mathrm{Re}(\langle \hat{e}_{j_1,\dots,j_t}|H|0^n\rangle)\right|^2}{d^2}\right)\\
&= \frac{1}{d|E|} |\langle 0^n |H Q_t H|0^n\rangle|\\
& \geq \frac{1}{k d |E|}\mathrm{Var}(H).
\end{align*}
This implies either $\beta_1$ or $\beta_2$ is larger than $\frac{1}{k d |E|}\mathrm{Var}(H)$. By plugging this into Eq.~\eqref{eq:betabound} we arrive at
\[
\langle \theta|H|\theta\rangle \geq \langle 0^n|H|0^n\rangle+\Omega\left(\frac{\Var(H)^2}{k^4d^4 \binom{k}{t-1}^2|E|}\right).
\]
This bound is minimized by allowing $t=\lceil{k/2+1\rceil}$ which results in the following overall lower bound on the improvement to the energy of the input state $\ket{0^n}$:
\begin{align}
\langle \theta|H|\theta\rangle \geq \langle 0^n|H|0^n\rangle+\Omega\left(\frac{\mathrm{Var}(H)^2}{ 2^{O(k)}d^4|E|}\right).\label{eq:rr40}
\end{align}
We note that the $\Omega(1/d^4)$ dependence of \eqref{eq:rr40} on the degree is quadratically worse than the bound we obtained for $2$-local Hamiltonians in \thmref{prodvar}; It would be interesting to recover the $\Omega(1/d^2)$ scaling in this case.

\section{Local classical algorithms}

\begin{theorem}\label{thm:Local classical algorithms}
Consider a two-local Hamiltonian $H=\sum_{\{i,j\}\in E} h_{ij}$ where $G=(V,E)$ is a $d$-regular triangle-free graph.  There is an efficient randomized algorithm that computes a product state $\ket{v}=\otimes_{i=1}^n \ket{v_i}$ satisfying
\begin{align}
   \expec_{v} \bra{v}H\ket{v}\geq \frac{1}{4}\mathrm{Tr}(H)+\Omega\left( \frac{\mathrm{quad}(H)}{\sqrt{d}}\right).\label{eq:r9}
\end{align}
\end{theorem}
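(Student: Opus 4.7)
The plan is to design a local classical algorithm producing a product state $|v\rangle$ whose expected energy matches the bound of Theorem~\ref{thm:random} for triangle-free graphs, by adapting the classical local algorithms of \cite{BarakMORRSTVW15, hastings2019BoundedDepth} to the quantum Hamiltonian setting. For each qubit $i$ independently I sample a uniformly random Pauli $Q_i\in\{X_i,Y_i,Z_i\}$, a uniformly random $\pm 1$ eigenstate $|v_i^{(0)}\rangle$ of $Q_i$, and a uniformly random Pauli $P_i$ anticommuting with $Q_i$. Using only the classical random data in $N(i)\cup\{i\}$ together with the Hamiltonian coefficients, I then form a real-valued ``local field'' $F_i = \sum_{j\in N(i)} i\langle v^{(0)}|[P_i,h_{ij}]|v^{(0)}\rangle$, and set $|v_i\rangle = e^{-i\eta\,\mathrm{sign}(F_i)\,P_i}|v_i^{(0)}\rangle$ for a suitably small $d$-independent constant $\eta>0$. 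Since each qubit is rotated by a single-qubit unitary, the output $|v\rangle=\bigotimes_i |v_i\rangle$ is a product state.

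The initialization alone gives $\expec\langle v^{(0)}|H|v^{(0)}\rangle = \tfrac14\mathrm{Tr}(H)$. Expanding each edge energy to first order in $\eta$ and writing $g_k^{(l)} = i\langle v^{(0)}|[P_k,h_{kl}]|v^{(0)}\rangle$, the per-edge first-order gain equals $\eta\,\expec[\mathrm{sign}(F_k)\,g_k^{(l)}+\mathrm{sign}(F_l)\,g_l^{(k)}]$. Triangle-freeness ensures that $N(k)\setminus\{l\}$ and $N(l)\setminus\{k\}$ are disjoint, so conditional on the classical data at $k$ and $l$ the remaining summands in $F_k = g_k^{(l)}+\sum_{j\in N(k)\setminus\{l\}} g_k^{(j)}$ are mutually independent, zero-mean, bounded and independent of the state at $l$. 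A Berry-Esseen argument then approximates this remainder by a Gaussian of standard deviation $\Theta(\sqrt{d})$, yielding $\expec[\mathrm{sign}(F_k)\,g_k^{(l)}\mid v_k^{(0)},v_l^{(0)},\ldots]=\Omega((g_k^{(l)})^2/\sqrt{d})$. Averaging over the random Pauli choices at $k$ and $l$, by the same two-qubit Pauli computation used in the triangle-free proof of Theorem~\ref{thm:random}, converts $\expec[(g_k^{(l)})^2]$ into a positive constant times $\sum_{x>0,y>0}(f_{x,y}^{kl})^2$, i.e.\ the local contribution to $\mathrm{quad}(H)$. Summing over edges gives a first-order gain of $\Omega(\eta\cdot\mathrm{quad}(H)/\sqrt{d})$.

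The main obstacle is to control the higher-order corrections so that they do not erase this gain. Two features of the construction make this tractable: first, the random Pauli basis causes the pure second-order terms $\expec\langle v^{(0)}|[P_k,[P_k,h_{kl}]]|v^{(0)}\rangle$ to vanish after averaging; second, the sign-coupled cross terms $\expec[\mathrm{sign}(F_k)\mathrm{sign}(F_l)\langle[P_k,[P_l,h_{kl}]]\rangle]$ pick up an extra $1/d$ factor when the same Berry-Esseen estimate is applied separately to $F_k$ and $F_l$, so the total higher-order loss is $O(\eta^2|E|/d)$ and is dominated by the first-order gain for a small enough constant $\eta$, giving \eqref{eq:r9}. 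Extracting the $1/\sqrt{d}$ factor from the sign-alignment and simultaneously suppressing the sign-coupled second-order term by an additional $1/\sqrt{d}$ is where triangle-freeness is essential: triangles would create correlations between $F_k$ and the state at qubit $l$, breaking the independence used in both steps, and this is precisely the technical core that mirrors the analysis in \cite{BarakMORRSTVW15, hastings2019BoundedDepth}.
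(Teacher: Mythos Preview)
Your approach differs from the paper's in two structural ways: the paper updates only a uniformly random subset $A\subseteq V$ and performs a \emph{full} local optimization at each $i\in A$ (choosing the Bloch vector $\vec R_i$ via Cauchy--Schwarz), whereas you rotate every vertex by a small fixed angle $\eta$. The paper's choice is not incidental: because only vertices in $A$ move, edges with both endpoints outside $A$ are untouched, edges crossing $A$ are optimized exactly, and for edges inside $A$ the updated Bloch vectors satisfy $\mathbb E[R_i^a]=0$ and are independent (by triangle-freeness), so their energy averages to the trace term. No perturbative ``second-order'' control is needed at all; the $1/\sqrt d$ then comes from a Paley--Zygmund / hypercontractivity bound on $\big(\sum_j (u^z_{ij})^2\big)^{1/2}$, together with the elementary inequality $\sqrt{\sum_j a_j}\ge d^{-1/2}\sum_j a_j$ for $0\le a_j\le 1$.

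In your scheme the simultaneous rotations force you to control the cross term $\sin^2(2\eta)\,\mathbb E[\mathrm{sign}(F_k)\,\mathrm{sign}(F_l)\,\langle[P_k,[P_l,h_{kl}]]\rangle]$, and your claimed $1/d$ suppression via Berry--Esseen has a real gap. The argument needs the ``noise'' $F_k-g_k^{(l)}=\sum_{j\in N(k)\setminus\{l\}} g_k^{(j)}$ to be approximately a centered Gaussian with standard deviation $\Theta(\sqrt d)$. But conditional on the randomness at $k$, each summand $g_k^{(j)}=i\langle v^{(0)}|[P_k,h_{kj}]|v^{(0)}\rangle$ has a nonzero mean coming from the one-local Pauli coefficients $f^{kj}_{x,0}$: writing $R_k$ for the Pauli orthogonal to both $Q_k$ and $P_k$, one finds $\mathbb E_{\epsilon_j,Q_j}[g_k^{(j)}]\propto f^{kj}_{R_k,0}$. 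The accumulated mean $\mu_k=\sum_{j}f^{kj}_{R_k,0}$ can be $\Theta(d)$, in which case $\mathrm{sign}(F_k)$ is essentially deterministic, $\mathbb E[\mathrm{sign}(F_k)\mid\cdot]$ is $\Theta(1)$ rather than $O(1/\sqrt d)$, and the cross term is only $O(\eta^2|E|)$---enough to swamp the $\Omega(\eta\,\mathrm{quad}(H)/\sqrt d)$ first-order gain. The same non-centering also invalidates the per-edge Berry--Esseen bound you invoke for the first-order term (though that part can be repaired via Khintchine/Paley--Zygmund on the total $\sum_k|F_k|$, as in the references you cite). To make your route work you would need either to modify the local field to subtract the one-local drift, or to add a case analysis showing that when $\mu_k$ is large the resulting one-local gain already dominates---neither of which is in the proposal.
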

Note that in Eq.~\eqref{eq:r9} the first term on the right hand side is equal to the expected energy of $H$ with respect to a random state $\rho=I/2^n$.

\begin{proof}[Proof of \thmref{Local classical algorithms}]
It will be convenient to work in a local Pauli basis $X,Y$ or $Z$ chosen at random and independently for each qubit.  We write $h_{ij}$ in this randomly chosen product basis. Let us define $w_{ij}=\mathrm{Tr}(h_{ij})/4$ and
\[
u_{ij}^{x}=\frac{1}{4}\mathrm{Tr}( h_{ij} X_i X_j)\quad u_{ij}^{y}=\frac{1}{4}\mathrm{Tr}( h_{ij} Y_i Y_j)\quad u_{ij}^{z}=\frac{1}{4}\mathrm{Tr}( h_{ij} Z_i Z_j).
\]
Due to the random choice of basis we have
\begin{equation}
\mathbb{E}[(u_{ij}^{a})^2]=\frac{1}{9}\mathrm{quad}(h_{ij}) \qquad a\in \{x,y,z\}.
\label{eq:eu}
\end{equation}

Following \cite{BarakMORRSTVW15,hastings2019BoundedDepth}, we start with a random i.i.d assignment of pure product $\ket{v}=\otimes_{i=1}^n \ket{v_i}$ states to the vertices. We then select a subset of vertices $A$ uniformly at random. For any vertex $i\in  A$, let $N(i)=\{\{i,j\}\in E: j \notin A\}$ be the neighboring edges that contain exactly one vertex in $A$ (i.e. vertex $i$). The remaining edges that are not in $\cup_{i\in A} N(i)$ either connect two vertices that are not in $A$ or connect two vertices in $A$. We denote the former by $M$ and the latter by $M'$.

The initial random pure state at each vertex $\r_i$ can be represented by $\r_i=\frac{1}{2}(\iden+ r^x_i X_i+r^y_i Y_i+r^z_i Z_i)$, where $(r^x_i,r^y_i,r^z_i) \in \bbR^3$ is the Bloch vector with norm $|r^x_i|^2+|r^y_i|^2+|r^z_i|^2=1$. For a vertex $i\in A$, the total energy of the edges $N(i)$ is given by
\begin{align}
    \sum_{j:\{i,j\}\in N(i)} \Tr[h_{ij}\r_i \ot \r_j]=\sum_{j:\{i,j\}\in N(i)} w_{ij}+ \sum_{j:\{i,j\}\in N(i)}(u_{ij}^x r^x_i r^x_j +u_{ij}^y r^y_i r^y_j+u_{ij}^z r^z_i r^z_j)+\sum_{j:\{i,j\}\in N(i)}D_{ij}(\vec{r}_i,\vec{r}_j)\label{eq:first2}\end{align}
where
\[
D_{ij}(\vec{r}_i,\vec{r}_j)=\sum_{a\neq b} c^{ab}_{ij} r_i^{a}r_j^b+\sum_{a\in \{x,y,z\}}(d^a_{ij}r_i^{a}+e^a_{ij} r_j^a).
\]
for some coefficients $c^{ab}_{ij},d^{a}_{ij}, e^{a}_{ij}$. Using Cauchy–Schwarz inequality, we see that the first two terms in Eq.~\eqref{eq:first2} can be maximized by applying a local unitary on each vertex $i\in A$ which rotates the state $\r_i$ to a state $\tilde{\r}_i$ with the Bloch vector 
\begin{align}
R^a_i=(\sum_{j:\{i,j\}\in N(i)}u^a_{ij} r_j^a)\bigg((\sum_{j:\{i,j\}\in N(i)}u_{ij}^x  r^x_j)^2 +(\sum_{j:\{i,j\}\in N(i)}u_{ij}^y r^y_j)^2+(\sum_{j:\{i,j\}\in N(i)}u_{ij}^z r^z_j)^2\bigg)^{-1/2} \label{eq:rr37}
\end{align}
for $a\in \{x,y,z\}$. When the denominator of Eq.~\eqref{eq:rr37} is zero, the vector $\vec{R}$ is chosen uniformly at random. Hence, we get
\begin{align}
    \sum_{j:\{i,j\}\in N(i)} \Tr[h_{ij}\tilde{\r}_i \ot \r_j]&= \sum_{\{i,j\}\in N(i)} w_{ij}+ \bigg((\sum_{j:\{i,j\}\in N(i)}u_{ij}^x  r^x_j)^2 +(\sum_{j:\{i,j\}\in N(i)}u_{ij}^y r^y_j)^2+(\sum_{j:\{i,j\}\in N(i)}u_{ij}^z r^z_j)^2\bigg)^{1/2}\\\nn
&+\sum_{j:\{i,j\}\in N(i)}D_{ij}(\vec{R}_i,\vec{r}_j) ,\nn
\end{align}
A property of this construction is that $\expec[R_i^a]=0$ for $a\in \{x,y,z\}$ and $R_i$, $R_j$ are independent of each other for $\{i,j\}\in M'$. This follows from the triangle-freeness, the definition of the set $N(i)$, and the initial uniform i.i.d. distribution of the state of vertices. Moreover, we have
\[
\mathbb{E}[r_j^{a}r_{k}^{b}]=\mathbb{E}[R_j^{a}r_{k}^{b}]=0 \quad \text{whenever}\quad a\neq b\quad.
\]

Using these observations, the expected value of the total energy after the local improvements is 
\begin{align}
    & \expec\left[\sum_{\{i,j\}\in M'} \Tr[h_{ij}\tilde{\r_i} \ot \tilde{\r_j}]\right] +\expec\left[\sum_{\{i,j\}\in M} \Tr[h_{ij}\r_i \ot \r_j]\right]+\expec\left[\sum_{i\in A} \sum_{j:\{i,j\}\in N(i)}\Tr[h_{ij}\tilde{\r_i} \ot \r_j]\right] \nn\\
     &=\sum_{\{i,j\}\in E} w_{ij}+ \expec \left[\sum_{i\in A}\bigg((\sum_{j:\{i,j\}\in N(i)}u_{ij}^x  r^x_j)^2 +(\sum_{j:\{i,j\}\in N(i)}u_{ij}^y r^y_j)^2+(\sum_{j:\{i,j\}\in N(i)}u_{ij}^z r^z_j)^2\bigg)^{1/2} \right]\nn \\
     &\geq \sum_{\{i,j\}\in E} w_{ij} +\expec\left[\sum_{i\in A} |\sum_{j:\{i,j\}\in N(i)}u_{ij}^z r^z_j|\right]\label{eq:r6}
\end{align}
The first term $\sum_{\{i,j\}\in E} w_{ij}$ corresponds to the expected energy when the product states are chosen uniformly at random. The second term is a lower bound on the improvement achieved by the local updates which we now show is at least $\Omega(|E|/\sqrt{d})$.

For a fixed choice of the set $A\subseteq V$, we define the random variable $\xi_i=  \sum_{j:\{i,j\}\in N(i)}u_{ij}^z r^z_j$. Using the ``second moment method,'' for $t\in [0,1]$, we get 
\begin{align}
      \Pr\left[|\xi_i|\geq t\sqrt{\expec[\xi^2]}\right]\geq (1-t^2)^2 \frac{\expec[\xi^2]^2}{\expec[\xi^4]} \label{eq:r5}
\end{align}
 One way to sample uniform pure states over Bloch sphere is to uniformly draw $\phi\sim [0,2\pi]$, $r_j^z \sim [-1,1]$ and set $r_j^x=\sqrt{1-(r_j^z)^2} \cos \phi$ and $r_j^y=\sqrt{1-(r_j^z)^2} \sin \phi$. Given this we have $\expec[r_j^z]=0$,  $\expec[(r_j^z)^2]=1/3$, $\expec[(r_j^z)^3]=0$, and $\expec[(r_j^z)^4]=1/5$. Hence, using Corollary~9.6 of \cite{o2014booleananalysis}, we have $\expec[\xi^4]\leq 9\cdot \expec[\xi^2]^2$. Plugging this in \eqref{eq:r5} implies that for a fixed choice of the set $A$, the expectation with respect to the random distribution of the initial product states for an arbitrary choice of $t\in[0,1]$ is
\begin{align}
     \expec\left[|\sum_{j:\{i,j\}\in N(i)}u_{ij}^z r^z_j|\right]&\geq \frac{1}{9}\cdot t(1-t^2)^2\cdot \expec[\xi^2]^{1/2}\nn\\  
     &= \frac{1}{9}\cdot t(1-t^2)^2\cdot \left( \sum_{j:\{i,j\}\in N(i)} (u_{ij}^z)^2 \expec\left[(r^z_j)^2\right]\right)^{1/2}\nn\\ 
    &\geq \frac{1}{9\sqrt{3}}\cdot t(1-t^2)^2\cdot \left(\sum_{j:\{i,j\}\in N(i)} (u_{ij}^z)^2\right)^{1/2}.\label{eq:r7}
\end{align}
Finally, we calculate the expectation with respect to the set $A\subseteq V$. Note that the set $N(i)$ is also a random variable determined by the set $A$. Conditioned on the event that the vertex $i\in A$ and using Theorem~9.24 of \cite{o2014booleananalysis}, we have
$$\Pr\left[ \sum_{j:\{i,j\}\in N(i)} (u_{ij}^z)^2\geq \frac{1}{2} \sum_{j:\{i,j\}\in E} (u_{ij}^z)^2\right]\geq \frac{1}{4e^2}.$$
Thus, we get 
\begin{align}
    \expec\left[\sum_{i\in A}\left(\sum_{j:\{i,j\}\in N(i)} (u_{ij}^z)^2\right)^{1/2}\right]&=\frac{1}{8\sqrt{2}e^2}\sum_{i\in V}\left(\sum_{j:\{i,j\}\in E} (u_{ij}^z)^2\right)^{1/2}\nn\\
    &=\frac{1}{4\sqrt{2}e^2}\cdot \frac{1}{\sqrt{d}}\sum_{\{i,j\}\in E} (u_{ij}^z)^2.
    \end{align}
Finally, taking the expectation over the random choice of local basis and using Eq.~\eqref{eq:eu} we get
 \begin{align}
\expec\left[\sum_{i\in A}\left(\sum_{j:\{i,j\}\in N(i)} (u_{ij}^z)^2\right)^{1/2}\right]
    &\geq \frac{1}{36\sqrt{2}e^2}\cdot \frac{1}{\sqrt{d}}\sum_{\{i,j\}\in E}\mathrm{quad}(h_{ij})\nn\\
    &\geq \frac{1}{36\sqrt{2}e^2}\cdot \frac{1}{\sqrt{d}}\mathrm{quad}(H)
\end{align}
We arrive at \eqref{eq:r9} by plugging this into \eqref{eq:r7} and using \eqref{eq:r6}.

\end{proof}

\end{document}